%
\documentclass[runningheads]{llncs}
\usepackage[T1]{fontenc}
%
\usepackage{graphicx}
%
\usepackage{color}


\usepackage{amsmath}
\usepackage{amsfonts}
\usepackage{amssymb}

\usepackage{quantikz}
\usepackage[linesnumbered,ruled]{algorithm2e}
\usepackage{algpseudocode}
\usepackage[%
  colorlinks=true,
  citecolor=blue,
  pagebackref=true
]{hyperref}
\usepackage{breakcites}

\usepackage{geometry}
\geometry{
  a4paper,         
  textwidth=15cm,  
  textheight=24cm, 
  heightrounded,   
  hratio=1:1,      
  vratio=2:3,      
}

\newcommand{\N}{\mathbb{N}}

\newcommand{\bit}{\{0,1\}}

\newcommand{\algo}{\mathcal}

\newcommand{\Sp}{\mathsf{Sp}}
\newcommand{\priv}[1]{\mathsf{#1}^{\mathrm{priv}}}
\newcommand{\pub}[1]{\mathsf{#1}^{\mathrm{pub}}}
\newcommand{\pubpriv}[1]{\mathsf{#1}}
\newcommand{\indiffparams}{(S, T, S_{\mathsf{sim}}, T_{\mathsf{sim}}, \epsilon)}
\usepackage{tikz}

\begin{document}
\title{(Quantum) Indifferentiability and\\ Pre-Computation}

\author{ }
\institute{ }
\authorrunning{ }
\titlerunning{ }
%
%
\author{Joseph Carolan\inst{1} \and
Alexander Poremba\inst{2}\and
Mark Zhandry\inst{3}}
\institute{University of Maryland. \email{jcarolan@umd.edu} \and
Massachusetts Institute of Technology. \email{poremba@mit.edu}
\and
NTT Research.
\email{mzhandry@gmail.com}}
\maketitle              
\begin{abstract}
Indifferentiability is a popular cryptographic paradigm for analyzing the security of ideal objects---both in a classical as well as in a quantum world. It is typically stated in the form of a composable and simulation-based definition, and captures what it means for a construction (e.g., a cryptographic hash function) to be ``as good as'' an ideal object (e.g., a random oracle). Despite its strength, indifferentiability is not known to offer security against \emph{pre-processing} attacks in which the adversary gains access to (classical or quantum) advice that is relevant to the particular construction. 
In this work, we show that indifferentiability is (generically) insufficient for capturing pre-computation. To accommodate this shortcoming, we propose a strengthening of indifferentiability which is not only composable but also takes arbitrary pre-computation into account. As an application, we show that the one-round sponge is indifferentiable (with pre-computation) from a random oracle. This yields the first (and tight) classical/quantum space-time trade-off for one-round sponge inversion.

\end{abstract}
\section{Introduction}
Hash functions are fundamental objects in cryptography, used in a multitude of applications such as password storage, integrity checks, and in digital signature schemes. Many real world cryptographic schemes can only be proven secure in the random oracle model \cite{bellare1993random} (ROM), in which a hash function is instead treated like an idealized perfectly random function. In this model, adversaries receieve only black-box access to the hash function, which enables one to prove query lower bounds justifying the security of a construction. Similarly, the quantum random oracle model \cite{boneh2011random} (QROM) models adversaries as having quantum query-access to an idealized random function. These tools have since become indispensable in analyzing real world cryptographic systems, both in the (post-)quantum and the classical setting.

\paragraph{Indifferentiability.}

In the real world, however, hash functions are built from lower-level building blocks, such as compression functions or publicly invertible permutations. The structure of these hashes can lead to attacks: length-extension attacks on Merkle-Damg\r{a}rd are a famous example; another example is circular-secure encryption when using Davies-Meyer~\cite{10.1145/1315245.1315303}. These attacks work regardless of the lower-level building block, and simply exploit the way the building block is used in the higher-level protocol.

One possibility is to analyze a given hash construction when used in specific scenarios. A much better solution~\cite{coron2005merkle} is to ensure that the hash function is \textit{indifferentiable} from a random oracle. Indifferentiability was first defined by Maurer, Renner, and Holenstein \cite{maurer04indiff}, and is a composable, simulation-based definition. An indifferentiable hash function is ``as good as'' a random oracle, in that we can ``lift'' any single-stage security property of random oracles---which capture most of the standard properties---to conclude that the property also holds for an indifferentiable hash function, provided the underlying building block is modeled as an idealized object. Indifferentiability therefore ensures that no attacks were introduced in the conversion from the lower-level building block to the higher-level hash function. Therefore, rather than analyzing the hash function in every scenario of interest, we can simply prove that it is indifferentiable and immediately conclude its security.

A strengthening of this notion, called reset indifferentiability, was introduced by Ristenpart, Shacham, and Shrimpton \cite{ristenpart11careful}. This notion requires a stateless simulator, but allows composable security in games with an arbitrary number of stages and adversaries. While numerous positive results are known for plain indifferentiability (such as domain extension of random oracles, and equivalence between ideal ciphers and random oracles \cite{coron2008randomcipher,holenstein2011cipherrevisited,dai2016feistel}), various barriers apply to reset indifferentiability \cite{ristenpart11careful,luykx2012impossibility,demay2013resource,baecher2013reset}: in particular, domain extension is not possible. Despite this barrier, Zhandry \cite{Zhandry21} has used reset indifferentiability to show, among other things, that ideal ciphers imply fixed size random oracles. In particular, it is shown that the single-round sponge is weakly reset-indifferentiable (quantum or classical) from a random oracle, when the rate does not exceed the capacity.

\paragraph{Adversaries with pre-computation.}

A common and desirable property of a cryptographic scheme is security against adversaries with some pre-computed advice. In the context of the random oracle model and other idealized models, this is usually considered in the auxiliary input model introduced by Unruh~\cite{cryptoeprint:2007/168}. In this model, adversaries are split into an inefficient offline and efficient online stage, where only a single (potentially small) advice message can be passed from offline to online. The online adversary then receives a challenge, or more generally must win some security game with the help of the advice. Many classical and quantum results are known  in this model \cite{coretti2018nonuniform,yao1990coherent,de2010timespace,hellman1980cryptanalytic,fiat2000rigorous,corrigan2019function,hhan2019auxiliary,chung2020quantumadvice,chung2020tight}.

The aforementioned works assume a random oracle. As mentioned above, however, hash functions are typically built from lower-level building blocks, and this structure may be exploited in attacks. A recent line of work therefore has investigated pre-computation attacks on structured hash functions, giving both attacks and lower-bounds in different settings~\cite{coretti2018nonuniform,ACDW20,GK22,freitag22spongetradeoffs,akshima23spongesttradeoffs,Akshima24}. There are not, to our knowledge, any known space-time tradeoffs (either classical or quantum) for inverting the one-round sponge.
A natural question is:

\begin{center}
\emph{Why not just use the stong notion of indifferentiability to lift space-time trade-offs for random oracles to structured hash functions?}
\end{center}

 After all, the goal of indifferentiability is to avoid having to analyze a structured hash construction in every conceivable scenario, and instead simply lift existing random oracle results. The short answer, unfortunately, is that indifferentiability as currently defined simply does not work. Due to the pre-computation phase, space-time trade-offs are not single-stage games, and since the pre-computation is inefficient, it is also not a multi-stage game. Therefore, the lifting theorems for (reset) indifferentiability simply do not apply to space-time trade-offs. In fact, we show that this is inherent: there is a function which is (strongly, statistical) reset indifferentiable from a random oracle, but admits a pre-computation attack on function inversion with polynomial-size advice and polynomial computation. This leads us to ask the following question:
\begin{center}
\emph{Does this mean that indifferentiability cannot help us understand space-time tradeoffs for structured hash functions?}
\end{center}

\section{Our contributions}

We now give an overview of our main results.

\subsection{(Quantum) indifferentiability with pre-computation}
Motivated by the aforementioned counterexample, we introduce a notion of indifferentiability with pre-computation---a strengthening of plain indifferentiability, in Section \ref{section:indiff-with-pre}. In this definition, a distinguisher is split into an offline and an online part, where the offline part is allowed unbounded access to some primitive. However, only a limited size message can be passed to the online adversary, which is bounded. We define indifferentiability by way of a simulator which is also split into two stages, an inefficient offline and an efficient online simulator, and again allow the simulators to pass some bounded size advice from offline to online. As our goal is to show (tight) space-time tradeoffs, rather than restricting to efficient (polynomial time) adversaries we instead phrase our definitions in a more fine-grained manner, parameterized by advice sizes, query count, and success probabilities. In strong indifferentiability, the simulator must simulate the adversaries' interface in both the offline and online phase, though in weak indifferentiability we instead allow the simulator to prepare both the advice for the online adversary, as well as for the online simulator.

In Section \ref{sec:composition}, we introduce a composition theorem under our definition of indifferentiability with pre-computation. Informally, this composition theorem says that if construction $\pubpriv C$ is indifferentiable from construction $\pubpriv R$, then a security game with pre-computation (for instance a space-time tradeoff for function inversion) instantiated using $\pubpriv C$ is as secure as one instantiated using $\pubpriv R$. This statement holds up to some loss incurred from indifferentiability, which depends on how much advice the simulator needs as well as how many queries.
\begin{theorem}[Informal version of Theorem \ref{thm:composition}]
    If construction $\pubpriv C$ is indifferentiable with pre-computation from construction $\pubpriv R$, then any security game with a pre-computing adversary which is secure when instantiated with $\pubpriv R$ remains secure when instantiated with $\pubpriv C$.
\end{theorem}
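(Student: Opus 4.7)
The plan is to prove the composition theorem by the usual indifferentiability hybrid argument, but carefully split into offline and online stages so that the advice and query budgets on both sides are tracked. Fix a pre-computing adversary $\mathcal{A}=(\mathcal{A}_{\mathrm{off}},\mathcal{A}_{\mathrm{on}})$ that wins the security game when it is instantiated with $\pubpriv{C}$. I would introduce an intermediate hybrid in which (i) every construction call made by the game or by $\mathcal{A}$ is answered by $\pubpriv{R}$ instead of $\pubpriv{C}$, and (ii) every primitive call of $\mathcal{A}$ is answered by the indifferentiability simulator $\mathcal{S}=(\mathcal{S}_{\mathrm{off}},\mathcal{S}_{\mathrm{on}})$ (with $\mathcal{S}$'s own construction queries going to $\pubpriv{R}$). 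The real execution and this hybrid correspond exactly to the two worlds in the definition of indifferentiability with pre-computation, so their statistical distance is bounded by the indifferentiability error $\epsilon$.

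Next I would turn the pair $(\mathcal{A},\mathcal{S})$ in the hybrid into a \emph{single} pre-computing adversary $\mathcal{A}'=(\mathcal{A}'_{\mathrm{off}},\mathcal{A}'_{\mathrm{on}})$ against the $\pubpriv{R}$-instantiated game. In the offline phase, $\mathcal{A}'_{\mathrm{off}}$ runs $\mathcal{A}_{\mathrm{off}}$ and $\mathcal{S}_{\mathrm{off}}$ in parallel, routing $\mathcal{A}_{\mathrm{off}}$'s primitive queries through $\mathcal{S}_{\mathrm{off}}$, and writes as advice the concatenation of the two offline-to-online messages (for the weak variant, the advice for $\mathcal{A}_{\mathrm{on}}$ is the one already produced by $\mathcal{S}_{\mathrm{off}}$, so no extra bits are needed). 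In the online phase, $\mathcal{A}'_{\mathrm{on}}$ runs $\mathcal{A}_{\mathrm{on}}$ with primitive queries served by $\mathcal{S}_{\mathrm{on}}$, and forwards game-interface calls directly. By construction, $\mathcal{A}'$'s view against $\pubpriv{R}$ equals $\mathcal{A}$'s view in the hybrid, so $\mathcal{A}'$ wins with the same probability as $\mathcal{A}$ does in the hybrid, which by the previous step is within $\epsilon$ of $\mathcal{A}$'s advantage against $\pubpriv{C}$.

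The last step is parameter bookkeeping, which I expect to be the main technical obstacle. The offline advice budget of $\mathcal{A}'$ is at most $S + S_{\mathsf{sim}}$; the online query count is at most $T \cdot T_{\mathsf{sim}}$ (each call made by $\mathcal{A}_{\mathrm{on}}$ can trigger up to $T_{\mathsf{sim}}$ $\pubpriv{R}$-queries through $\mathcal{S}_{\mathrm{on}}$); and the online running time is increased by a corresponding factor from $\mathcal{S}_{\mathrm{on}}$. Assuming the $\pubpriv{R}$-instantiated game is $\delta$-secure against adversaries with these combined resources, we conclude that the $\pubpriv{C}$-game is secure up to $\delta+\epsilon$.

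The main subtleties I anticipate are (a) verifying that the offline/online split of the composed distinguisher $(\mathcal{A},\text{game})$ used in the hybrid step genuinely matches the split required by the indifferentiability definition --- in particular, the game's challenger logic (sampling challenges, checking winning predicates) must be partitioned between the offline and online phases consistently with the adversary's, without smuggling primitive access across the split --- and (b) handling the weak versus strong variants uniformly, since in the weak case $\mathcal{S}_{\mathrm{off}}$ is also responsible for preparing $\mathcal{A}_{\mathrm{on}}$'s advice, and the reduction must ensure that this prepared advice is forwarded unchanged rather than regenerated. Once these accounting issues are dispatched, the hybrid plus reduction gives the claimed bound immediately.
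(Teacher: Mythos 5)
Your approach is essentially the paper's: the hybrid you describe is exactly the ideal world of the pre-computation indifferentiability game, and collapsing the pair $(\mathcal A,\mathcal S)$ into a single adversary $\mathcal A'$ with $\mathcal A_0' = \mathcal S_0$ and $\mathcal A_1' = \mathcal A_1[\mathcal S_1[\cdot],\cdot]$ is precisely the paper's reduction, including the distinction between weak and strong variants. One bookkeeping point: the paper's parameter $T_{\mathsf{sim}}$ is the simulator's \emph{total} online query budget, not a per-query overhead, so the constructed adversary makes at most $T_{\mathsf{sim}}$ queries to $\pub R$, not $T\cdot T_{\mathsf{sim}}$; your multiplicative estimate would give a looser (though still valid) bound under a different convention. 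Your worry~(a) about partitioning the challenger between phases does not arise in the paper's framework, because the cryptosystem $\algo P$ and environment $\algo E$ are modeled as purely online entities whose queries are counted into $T=T_1+T_2$, so the offline/online split of the distinguisher is simply ``adversary's offline stage'' versus ``everything else,'' with no primitive access to smuggle.
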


\subsection{(Quantum) space-time trade-offs for sponge inversion}

In recent years, the National Institute of Standards and Technology (NIST) announced a new international hash function standard known as SHA-3. Unlike its predecessor SHA-2, which was rooted in the Merkle-Damg\aa rd construction~\cite{Mer88,Mer90,eurocrypt-1987-2247}, the new hash function standard uses Keccak~\cite{KeccakSub3}---a family of cryptographic functions based on the idea of \emph{sponge hashing}~\cite{KeccakSponge3}. This particular approach  allows for both variable input length and variable output length, which makes it particularly attractive towards the design of cryptographic hash functions. 
The internal state of a sponge function gets updated through successive applications of a so-called \emph{block function} $\varphi: \bit^{r+c} \rightarrow \bit^{r+c}$ (which is typically modeled as a random permutation), where we call the parameters $r \in \N$ the \emph{rate} and $c \in \N$ the \emph{capacity} of the sponge. 

\paragraph{One-round sponge.} 
Suppose that $\varphi: \bit^{r+c} \rightarrow \bit^{r+c}$ is a permutation. In the special case when there is only a single application of the block function, the sponge function $\mathsf{Sp}^\varphi: \bit^{r} \rightarrow \bit^r$ takes a simple form which is illustrated in Figure~\ref{fig:single-sponge}; namely, on input $x \in \bit^r$, the output is given by $y = \mathsf{Sp}^\varphi(x)$, where $y$ corresponds to the first $r$ bits of $\varphi(x||0^c)$. In other words, $\mathsf{Sp}^\varphi$ is defined as the restriction of $\varphi$ onto the first $r$ bits of its output.
 
\begin{figure}[h]
\begin{center}
{\small
\begin{tikzpicture}
  \draw (5,-0.75) rectangle (6,0.75) node [pos=.5]{$\varphi$}; 

\draw[-] (4.6,0.35) node[left]{$x$ \hspace{1mm}} -- (5,0.35);
\draw[-] (4.6,-0.35) node[left]{$0^c$} --(5,-0.35);
  \draw[-] (6.0,0.35) node[right]{\hspace{5mm}$y$} --(6.4,0.35);
\draw[-] (6.0,-0.35) node[right]{\hspace{5mm}$z$} --(6.4,-0.35);
 \end{tikzpicture}
}
\end{center}
\caption{The one-round sponge.}
\label{fig:single-sponge}
\end{figure}
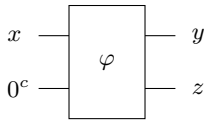

As an application of our results on indifferentiability, we show that the one-round sponge construction is both quantumly and classically indifferentiable with pre-computation from a random oracle $f: \bit^r \rightarrow \bit^r$ when the rate does not exceed the capacity (see Section \ref{sec:sponge-indiff}). Our proof consists of two parts. First,  in Section \ref{sec:improved-sym}, we use \emph{symmetrization techniques} and give a (strong, stateless) simulator with shared randomness which generates a permutation whose sponge-hash precisely matches a given random function. Next, we show how to remove the shared randomness at the cost of downgrading to a weak and stateful simulator with a single bit of advice.
\begin{theorem}[Informal version of Theorem \ref{thm:sponge-indiff} and Corollary \ref{cor:weak-indiff-sponge}]
    The single-round sponge is indifferentiable with pre-computation from a random oracle, both quantumly and classically, when the rate does not exceed the capacity. This holds with unbounded adversaries, either with stateless strong simulators with shared randomness, or stateful weak simulators that pass a single bit of advice.
\end{theorem}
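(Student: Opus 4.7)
The plan is to prove the two halves of the theorem separately, both resting on a symmetrization of the one-round sponge that matches $\mathsf{Sp}^\varphi$ with a given target function $f$. For the strong stateless simulator with shared randomness, I take the shared randomness to be a uniformly random permutation $\pi : \bit^{r+c} \to \bit^{r+c}$ available to the simulator only, and I define $S^{f,\pi}(y)$ as the composition of $\pi$ with an $f$-dependent permutation that only re-permutes within ``sponge slots'' $\{u \| v : v \in \bit^c\}$; the construction is stateless because every query can be answered from $\pi, f$, and $y$ alone. The crucial symmetrization property I would establish is that, for every fixed $f$, the induced permutation $S^{f,\pi}$ is distributed uniformly over the set $\{\varphi : \mathsf{Sp}^\varphi = f\}$ as $\pi$ ranges over uniform permutations. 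With this property, the ideal-world joint distribution $(f, S^{f,\pi})$ (with $f$ uniform and $\pi$ uniform) has the same conditional distribution of $\varphi$ given $f$ as the real-world joint $(\mathsf{Sp}^\varphi, \varphi)$ for uniform $\varphi$, so the two joint distributions differ only in the marginal of $f$.

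The indifferentiability loss is therefore governed by the statistical distance between the ``sponge marginal'' on $f$ (induced by $f := \mathsf{Sp}^\varphi$ for uniform $\varphi$) and the uniform distribution on functions $\bit^r \to \bit^r$; this is where the assumption $r \le c$ enters, since having at least as much capacity as rate ensures that the number of permutations extending a given $f$ varies only mildly with $f$. To handle pre-computing adversaries---whose offline phase is query-unbounded but whose online phase and advice are bounded---I would not bound this statistical distance outright but rather apply the composition theorem of Section~\ref{sec:composition} to reduce to a query-bounded distinguishing problem, and then conclude using compression-style arguments (classically in the style of Unruh and Coretti et al., and via compressed oracles in the quantum case) to obtain a bound scaling with the advice size and online query count rather than the full statistical distance.

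For the weak stateful variant without shared randomness, I would have the (inefficient) offline simulator sample its own $\pi$ and answer the offline distinguisher's queries by the same symmetrization, then pass the online simulator a single advice bit together with oracle access to $f$; the online simulator, being stateful, continues to lazy-sample sponge-consistent permutation values. The single bit of advice would indicate whether a specific ``bad'' event occurred offline that would otherwise break lazy consistency---such as whether the offline queries induced a sponge-slot collision pattern requiring a corrective swap---and the online simulator then behaves differently in the two branches. The main obstacle I anticipate is in this second part: both establishing that one bit of advice is actually sufficient (which requires carefully isolating the bad event into a binary partition) and extending the analysis to quantum adversaries, where tracking sponge-consistent lazy-sampled permutations under superposition queries requires adapting the compressed oracle and compressed permutation formalism to the sponge constraint.
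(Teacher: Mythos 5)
Your first-stage plan is in the right spirit: you correctly identify that one wants a symmetrization that, for each fixed $f$, produces a uniformly random element of $\{\varphi : \mathsf{Sp}^\varphi = f\}$, and that the residual indifferentiability loss is then governed by the statistical distance between the sponge marginal on $f$ and the uniform distribution on functions $\bit^r \to \bit^r$. The paper realizes this via a Young-subgroup double-coset argument (Lemma~\ref{lem:sponge-sym}), sampling symmetrizers $\omega \sim H$, $\sigma \sim K$ with shared randomness and setting $\hat\varphi = \omega \circ \pi_f \circ \sigma$; your single $\pi$ re-permuting ``sponge slots'' can plausibly be made equivalent, so no objection there.

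Where you go wrong is in the bound. You propose to \emph{avoid} bounding the statistical distance ``outright'' and instead reduce to a query-bounded distinguishing problem, appealing to compression-style arguments. This cannot work: the offline phase of a pre-computing distinguisher is query- and computation-unbounded, and the whole reason the paper goes through $\epsilon$-\emph{perfect} reset indifferentiability (Lemma~\ref{lem:perfect-reset-implies-precomp}) is precisely that the simulation must fool an adversary who may read the entire truth table. Compression arguments scale with queries and advice and give nothing against such an adversary. The paper instead bounds the total variation distance between truth tables directly (Lemma~\ref{lem:sponge-ttable-random}), invoking the Gilboa--Gueron bound on distinguishing a truncated random permutation from a random function, which yields $O(2^{-r/2})$ against \emph{any} distinguisher and makes the assumption $r \le c$ quantitative. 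Also note that the composition theorem is a consequence of indifferentiability with pre-computation, not an ingredient in its proof, so invoking it to establish the indifferentiability itself is circular.

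Your second stage---removing shared randomness---has a genuine gap. You want the offline simulator to sample its own $\pi$, answer the offline distinguisher, and then pass a single ``bad event'' bit to an online simulator that lazy-samples sponge-consistent permutation values. But the offline distinguisher may have queried (even learned) the entire simulated permutation, and the online simulator must then answer \emph{consistently} with all those responses. Consistency requires knowing $\pi$, which is exponentially many bits; a single advice bit cannot encode it, and there is no ``binary partition'' of bad events that rescues this. The paper's Lemma~\ref{lem:remove-sr} takes an entirely different, generic route: since we are in the \emph{weak} setting the simulator may depend on the distinguisher, so fix $\algo D$, let $p$ be its ideal-world acceptance probability averaged over shared randomness $\mathsf{SR}$, and either hard-code a single $\mathsf{SR}$ achieving $p$ exactly, or find $\mathsf{SR}_0, \mathsf{SR}_1$ with $p_0 < p < p_1$ and have the offline simulator flip a Bernoulli coin $s$ with bias $(p - p_0)/(p_1 - p_0)$, passing $s$ to the online simulator so that both run the shared-randomness simulator with $\mathsf{SR}_s$ hard-coded. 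This keeps offline and online simulators perfectly coordinated without lazy sampling, and costs exactly one advice bit. You should replace your bad-event scheme with this derandomization argument, at which point the quantum case needs no separate compressed-oracle machinery for this step: the argument is purely about convex combinations of acceptance probabilities and is oblivious to whether the distinguisher is classical or quantum.
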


With shared randomness removed, we show in Section \ref{sec:ST-trade-off} how to derive tight quantum space-time tradeoffs for inverting the single-round sponge. To our knowledge, this is the first classical \emph{or} quantum space-time tradeoff for sponge inversion. We summarize our results in Table \ref{tab:permutation inversion work}.

\begin{table}[t]
    \centering
    \caption{Summary of our space-time trade-offs in Section \ref{sec:ST-trade-off}.}\label{tab:permutation inversion work}
    \begin{tabular}{|c|p{3.2cm}|p{3.2cm}|}
         \hline
   &  Function inversion &  Sponge inversion\\
   \hline
    Classical advice, & $ST = \tilde{\Omega}(\epsilon \,2^r)$ &$ST = \tilde{\Omega}(\epsilon \, 2^r)$ \\
    classical queries& Refs.~\cite{yao1990coherent,de2010timespace} &(\textbf{this work})\\
    \hline
    Classical advice, & $ST + T^2= \tilde{\Omega}(\epsilon \,2^r)$ & $ST + T^2= \tilde{\Omega}(\epsilon \,2^r)$\\
    quantum queries&Ref.~\cite{chung2020tight} & (\textbf{this work})\\
    \hline
    Quantum advice, & $ST + T^2= \tilde{\Omega}(\epsilon^3 \,2^r)$ &  $ST + T^2= \tilde{\Omega}(\epsilon^3\, 2^r)$ \\ 
    quantum queries &Ref.~\cite{chung2020tight} & (\textbf{this work})\\
    \hline
    \end{tabular}
\end{table}

\subsection{Related work}
\label{sec-related}

We now briefly discuss several related works on the topic of both (quantum) indifferentiability, pre-computation and the sponge construction.

Maurer, Renner, and Holenstein \cite{maurer04indiff} first proposed the notion of \emph{indifferentiability} as a composable and simulation-based definition for what it
means for a construction to be ``as good as'' as
an ideal object. Ristenpart, Shacham, and Shrimpton~\cite{ristenpart11careful} observed that indifferentiability is insufficient for
“multi-stage” games, and proposed the notion of \emph{reset indifferentiability} instead which requires the simulator to be stateless.
Bertoni, Daemen, Peeters and Van Assche~\cite{10.1007/978-3-540-78967-311} proved the indifferentiability of the many-round sponge construction.
Carstens, Ebrahimi, Tabia, and Unruh~\cite{cryptoeprint:2018/257} initiated the study of indifferentiability in the quantum setting, and analyzed the security of both Feistel networks and the sponge construction under conjectures.
Zhandry \cite{Zhandry2018} introduced the compressed oracle technique, and used it to prove quantum indifferentiability of the Merkle-Damg\r{a}rd construction.
Czajkowski, Majenz, Schaffner and Zur~\cite{Czajkowski2019} proved the quantum indifferentiability of the (many-round) sponge construction in the case when the block function is modeled as a random function or a random (non-invertible) permutation. 
Zhandry~\cite{Zhandry21} showed that the one-round sponge (in the special case when the message length is roughly half the block length) is quantumly reset-indifferentiable from a random oracle (even if the adversary has access to the inverse of the permutation). However, contrary to our work, none of the aforementioned works on indifferentiability take pre-computation into account.

Yao \cite{yao1990coherent} and De, Trevisan and Tulsiani~\cite{de2010timespace} gave (classical) space-time trade-offs for function inversion. Unruh~\cite{cryptoeprint:2007/168} introduced the auxiliary-input random oracle model. Nayebi, Aaronson, Belovs and Trevisan~\cite{10.5555/2871350.2871351} generalized space-time trade-offs for function inversion against quantum adversaries with classical advice. Later, Chung, Liao and Qian~\cite{chung2020quantumadvice} generalized these bounds in the case of quantum advice. Hhan, Xagawa and Yamakawa~\cite{hhan2019auxiliary} gave space-time trade-offs for function (and permutation) inversion in the auxiliary-input quantum random oracle model.
Chung, Guo, Liu and Qian~\cite{chung2020tight} gave the first tight quantum space-time trade-off for function inversion with both classical and quantum advice.
Alagic, Bai, Poremba and Shi~\cite{alagic2023twosided} showed quantum space-time trade-offs for two-sided permutation inversion---the task of inverting a random but invertible permutation, where the inverter also has access to a punctured inverse oracle.
Freitag, Ghoshal and Komargodski~\cite{freitag22spongetradeoffs}, and subsequently also Akshima, Duan, Guo and Liu~\cite{akshima23spongesttradeoffs,Akshima24}, gave space-time trade-offs for finding short collisions in the sponge construction. Carolan and Poremba~\cite{carolan2024oneway} gave a (tight) quantum query lower bound for one-round sponge-inversion via symmetrization techniques. In concurrent work, Majenz, Malavolta and Walter~\cite{majenz2024permutationsuperpositionoraclesquantum} also gave  (non-tight) quantum query lower bounds for the task of sponge inversion (in a more general setting) via compressed oracle techniques. Ananth, Mutreja and Poremba~\cite{cryptoeprint:2024/1687} recently gave space-time trade-offs for a simple query problem; namely that of finding elements in (one-round) sponge hash tables.
Notably, none of the aforementioned works on function inversion result in (either classical or quantum) space-time trade-offs for the single-round sponge inversion task, as in our work.

\subsection*{Acknowledgements}

The authors would like to thank Christian Majenz, Giulio Malavolta and Gorjan Alagic for useful discussions. JC is supported by the US Department of Energy grant no. DESC0020264. AP is supported by the National Science Foundation (NSF) under Grant No.\ CCF-1729369.

\section{Preliminaries}

\paragraph{Basic notation.}
For $N\in \N$, we use $[N] = \{1,2,\dots,N\}$ to denote the set of integers up to $N$. The symmetric group on $[N]$ is denoted by $S_N$. 
In slight abuse of notation, we oftentimes identify elements $x \in [N]$ with bit strings $x \in \bit^n$ via their binary representation whenever $N=2^n$ and $n \in \N$. Similarly, we identify permutations $\pi \in S_N$ with permutations $\pi: \bit^{n} \rightarrow 
\bit^n$ over bit strings of length $n$.

\paragraph{Quantum computing.} A finite-dimensional complex Hilbert space is denoted by $\algo H$, and we use subscripts to distinguish between different systems (or registers); for example, we let $\algo H_{A}$ be the Hilbert space corresponding to a system $A$. 
The tensor product of two Hilbert spaces $\algo H_{A}$ and $\algo H_{B}$ is another Hilbert space which we denote by $\algo H_{AB} = \algo H_{A} \otimes \algo H_{B}$.  We let $\algo L(\algo H)$ denote the set of linear operators over $\algo H$. A quantum system over the $2$-dimensional Hilbert space $\algo H = \mathbb{C}^2$ is called a \emph{qubit}. For $n \in \mathbb{N}$, we refer to quantum registers over the Hilbert space $\algo H = \big(\mathbb{C}^2\big)^{\otimes n}$ as $n$-qubit states. We use the word \emph{quantum state} to refer to both pure states (unit vectors $\ket{\psi} \in \algo H$) and density matrices $\rho \in \algo D(\algo H)$, where we use the notation $\algo D(\algo H)$ to refer to the space of positive semidefinite linear operators of unit trace acting on $\algo H$. A \emph{unitary} $U: \algo L (\algo H_{A}) \to \algo L(\algo H_{A})$ is a linear operator such that $U^\dagger U = U U^\dagger = I_A$, where the operator $I_A$ denotes the identity operator on system $\algo H_{A}$.
A quantum algorithm is a uniform family of quantum circuits $\{\algo A_\lambda\}_{\lambda \in \mathbb{N}}$, where each circuit $\algo A_\lambda$ is described by a sequence of unitary gates and measurements; moreover, for each $\lambda \in \mathbb{N}$, there exists a deterministic Turing machine that, on input $1^\lambda$, outputs a circuit description of $\algo A_\lambda$.
We say that a quantum algorithm $\mathcal{A}$ has oracle access to a classical function $f: \{0,1 \}^{n} \rightarrow \{0,1 \}^m$, denoted by $\mathcal{A}^f$, if $\mathcal{A}$ is allowed to use a unitary gate $O^f$ at unit cost in time. The unitary $O^f$ acts as follows on the computational basis states of a Hilbert space $\mathcal{H}_X \otimes \mathcal{H}_Y$ of $n+m$ qubits:
$$
O^f: \quad
\ket{x}_X \otimes \ket{y}_Y \longrightarrow \ket{x}_X \otimes \ket{y \oplus f(x)}_Y,
$$
where the operation $\oplus$ denotes bit-wise addition modulo $2$. Oracles with quantum query-access have been studied extensively, for example in the context of quantum complexity theory~\cite{Bennett1997}, as well as in cryptography~\cite{boneh2011random,cryptoeprint:2018/904,cryptography4010010}.

\subsection{(Quantum) Indifferentiability}

Our notation is based on~\cite{Czajkowski2019} which is close to the original definition of Maurer, Renner, and
Holenstein~\cite{maurer04indiff}. The basic idea is that an adversary who is interacting with some cryptographic system $\mathsf{C}$ has access to two \emph{interfaces}:
\begin{itemize}
\item a \emph{public} interface $\mathsf{C}_\lambda^{\mathrm{pub}}$  (for example, a permutation $\varphi$) which is some public interface that takes as input a certain number of (qu)bits, and outputs a number of (qu)bits. 

 \item a \emph{private} interface $\mathsf{C}_\lambda^{\mathrm{priv}}$ (for example, the sponge hash $\mathsf{Sp}^\varphi$ which uses the permutation $\varphi$ internally) which is also some function that takes as input a certain number of (qu)bits and outputs a number of (qu)bits. 
\end{itemize}

For the purposes of indifferentiability, we will consider constructions where the private interface is constructed from the public interface (i.e. $\priv C_\lambda [\pub C_\lambda]$ is an efficient algorithm with an oracle for $\pub C_\lambda$). We can now define what it means for two interfaces to be indifferentiable.

\begin{definition}[Indifferentiability]

Let $\lambda \in \N$ be the security parameter.
A cryptographic system $\mathsf{C}$ is $(T,\epsilon)$-indifferentiable from $\mathsf{R}$, if there exists an efficient (classical or quantum) simulator $\algo{S}$ and a negligible function $\epsilon$ such that, for any efficient (classical or quantum) distinguisher $\algo D$ making at most $T$ (classical or quantum) queries to $\mathsf{C}$, it holds that
$$
\left| \Pr\left[\algo D[\mathsf{C}_\lambda^{\mathrm{priv}}[\mathsf{C}_\lambda^{\mathrm{pub}}],\mathsf{C}_\lambda^{\mathrm{pub}}]=1\right] - 
\Pr\left[\algo D[\mathsf{R}_\lambda^{\mathrm{priv}},\algo{S}[\mathsf{R}_\lambda^{\mathrm{pub}}]]=1\right]\right| \leq \epsilon(\lambda).
$$
\label{def:indiff}
\end{definition}


\begin{figure}[t]
    \centering
    \includegraphics[width=.55\linewidth]{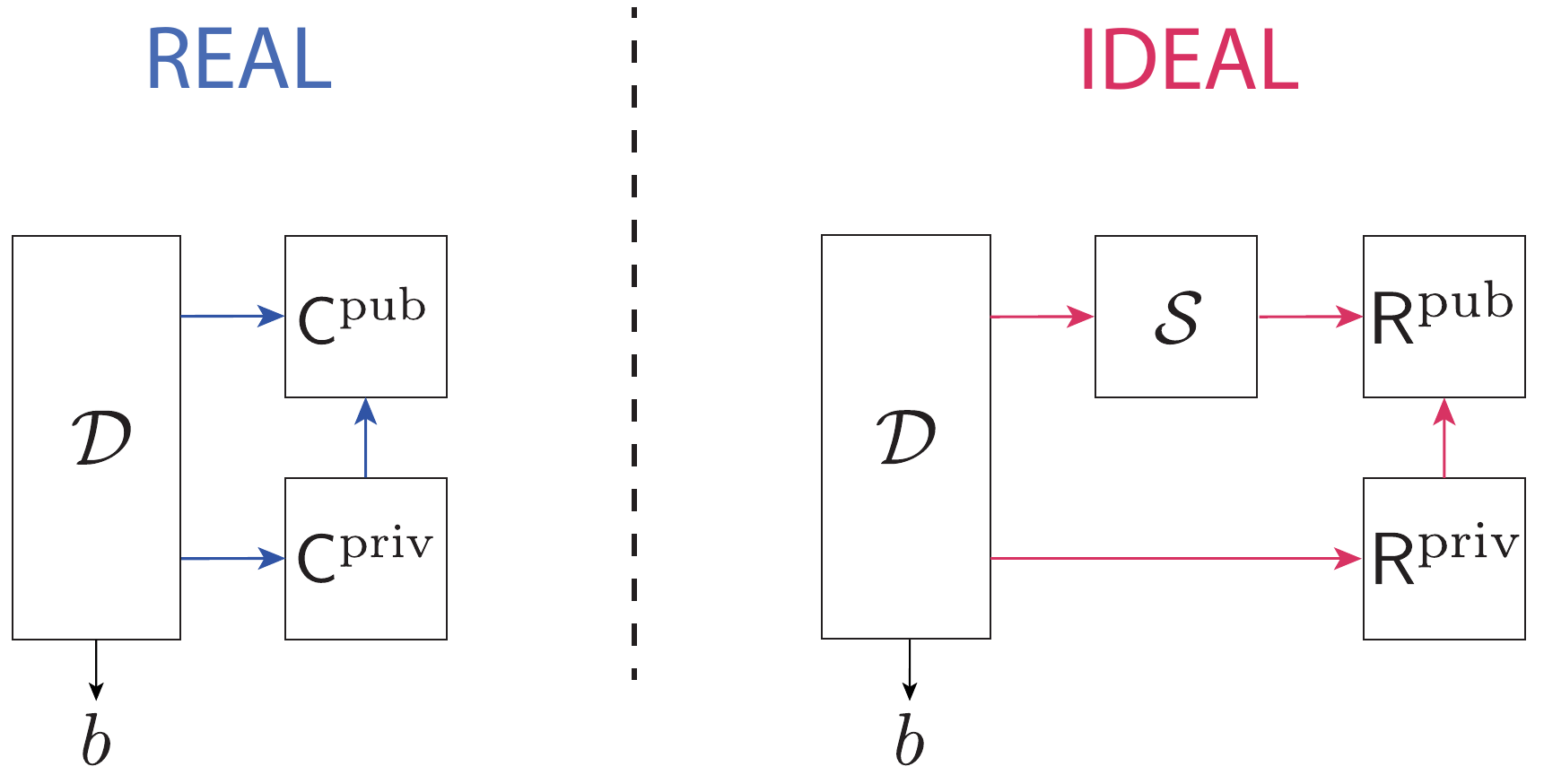}
    \caption{Schematic representation of indifferentiability of construction $\mathsf{C}$ from idealized primitive $\mathsf{R}$. The arrows denote ``access to'' the pointed system, and $\algo D$ is the distinguisher.}
    \label{fig:indiff}
\end{figure}

These notions are very general, encompassing many scenarios in which one would like to reason about building some idealized primitive $\pubpriv R$ from a different idealized primitive $\pubpriv C$ in a composably secure fashion. The composition theorem for plain indifferentiability states that a primitive instantiated using the private interface of $\pubpriv C$ is \textit{as secure as} a primitive instantiated using the private interface of $\pubpriv R$ in many scenarios. More formally, any security game that is secure against a single adversary having oracle access to $\pub R$ will also be secure against a single adversary having oracle access to $\pub C$, up to the indifferentiability loss \cite{maurer04indiff}. However, security games involving multiple rounds with distinct adversaries are not generically proven secure by indifferentiability \cite{ristenpart11careful}. For this, one requires a stronger form called \textit{reset} indifferentiability, in which the simulator $\algo S$ in Definition \ref{def:indiff} is required to be stateless.

Often, one considers security of a cryptographic system that allows a pre-processing adversary. Security games in this setting consider an adversary split into two phases: an inefficient offline phase in which some advice is computed about the underlying primitive (in our case, the interface for $\pubpriv C$ or $\pubpriv R$), and an efficient online adversary. The prepared advice is forwarded to the efficient online adversary, which then receives one or more challenge(s) that are independent of the advice. This is a more general setting than the single adversary setting of plain indifferentiability, but \textit{a priori} it may seem like security of such a game would be implied by reset indifferentiability, as this is a security game with multiple stages and adversaries. However, we show that this is not the case: this is because the offline adversary is inefficient, which is not captured by reset indifferentiability.


In fact, even strong notions of indifferentiability do not imply non-trivial space-time trade-offs. At a high level, the counterexample is a random function with a trapdoor (as both the public and private interface). Such a function is (reset, statistical, strong, quantum or classical) indifferentiable from a random oracle for query bounded adversaries, but clearly admits pre-computation attacks for one-wayness and many other security games which would be secure with a random oracle.

\section{Separating reset indifferentiability from pre-computaion}

Let $O_g$ be an oracle for a function $g: \bit^{2n} \rightarrow \bit^n$ drawn from the distribution which is uniform random, except on inputs of the form $x \Vert s$ for some random trapdoor $s \in \bit^n$. For such inputs, define $g(x \Vert s)=x$. Let oracle $O_h$ be an oracle for a function $h: \bit^{2n} \rightarrow \bit^n$ drawn uniformly at random. Note that the distributions of $g$ and $h$ have large total variation distance, but are quantum query indistinguishable by the one-way to hiding lemma \cite{unruh14revocable}. We will consider two constructions, $\pubpriv C$ and $\pubpriv R$, defined as follows. \begin{enumerate}
    \item Construction $\pubpriv C$ is defined by $\pub C := O_g$, and $\priv C [\pub C] := \pub C = O_g$. In other words, both the public and private interface for $\pubpriv C$ are the same oracle, and they are for a random function with a trapdoor.
    \item Construction $\pubpriv R$ is defined by $\pub R := O_h$, and $\priv R [\pub R] := \pub R = O_h$. In other words, both the public and private interface for $\pubpriv R$ are the same oracle, and they are for a random function (with no trapdoor).
\end{enumerate}

Now let $\algo S[\pub R]$ be the trivial simulator which has access to $h$, and answers queries $x$ as $h(x)$.

\begin{theorem}
    Construction $\pubpriv C$ is $(T, \epsilon)$ strong quantum statistical reset indifferentiable from construction $\pubpriv R$ for any $\epsilon=O(T^2/2^n)$.
\end{theorem}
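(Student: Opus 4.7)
My plan is to first reduce to a single-oracle distinguishing problem. The trivial simulator $\algo S[\pub R]$ simply answers queries by $O_h$, so it is stateless (establishing the ``reset'' requirement for free) and makes the ideal world present $\algo D$ with two copies of $O_h$, while the real world presents two copies of $O_g$. Since $\algo D$'s queries to either interface are answered by the same oracle in each world, the distinguishing advantage collapses to the advantage of a $T$-query quantum algorithm in telling apart $O_g$ from $O_h$.

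Next, I would record the natural coupling of $g$ and $h$: sample $h$ uniformly at random and $s \in \bit^n$ uniformly and independently, and then define $g$ to agree with $h$ everywhere outside the ``column'' $X_s := \{x \Vert s : x \in \bit^n\}$, with $g(x \Vert s) = x$ on $X_s$. Thus $g$ and $h$ differ only on a set of density $2^{-n}$ inside the domain $\bit^{2n}$, and this set is indexed by the trapdoor $s$, which is information-theoretically independent of the distinguisher's initial state.

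The core step is then a quantum query lower bound forcing the distinguishing advantage down to $O(T^2/2^n)$. The cleanest route is a reduction to quantum search for the hidden trapdoor: from any $T$-query $\algo D$ with advantage $\epsilon$ one extracts, via measuring the query register at a uniformly chosen step, an algorithm that outputs $s$ with probability at least some polynomial function of $\epsilon$ using $O(T)$ queries, after which the BBBV lower bound for unstructured search over $2^n$ candidates forces $\epsilon = O(T^2/2^n)$. An equivalent route is to analyze the experiment via Zhandry's compressed oracle and bound the quantum mass that the compressed database places on the secret column $s$; because the adversary's state is independent of $s$ when interacting with $O_h$, this mass grows at most quadratically in $T$ and linearly in $2^{-n}$.

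The main obstacle is obtaining the tight $O(T^2/2^n)$ bound rather than the looser $O(T/2^{n/2})$ bound one would get from a direct application of Unruh's one-way to hiding lemma, which loses an unavoidable square root. Handling this tight bound is precisely what motivates the reduction to search (or the compressed-oracle calculation) over a black-box O2H application. Since every argument here is information-theoretic, the resulting conclusion is \emph{statistical}, holding against unbounded distinguishers that are only constrained by the query budget $T$; and since the simulator is stateless, the conclusion upgrades from plain to \emph{reset} indifferentiability without any additional work.
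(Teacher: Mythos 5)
Your proof shares the paper's skeleton exactly: the trivial stateless simulator (giving reset for free), the collapse of the two-interface game to a single $T$-query distinguishing game between $O_g$ and $O_h$, the information-theoretic nature of the argument giving statistical (and strong) indifferentiability. At that level you and the paper agree, and your final paragraph about why the conclusion is both statistical and reset is accurate.

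Where you diverge is in how the bound on the $O_g$-vs-$O_h$ advantage is obtained, and here you have correctly put your finger on something delicate. The paper's proof is a one-liner citing the one-way-to-hiding lemma of \cite{unruh14revocable}; as you observe, a direct O2H application gives $\epsilon \leq 2T\sqrt{P_{\mathrm{find}}}$, and since the puncture adversary's state when interacting with $O_h$ is independent of the $n$-bit trapdoor $s$, the best one gets straightforwardly is $P_{\mathrm{find}} \approx 2^{-n}$ and hence $\epsilon = O(T\cdot 2^{-n/2})$, not the $O(T^2/2^n)$ claimed. You are right that there is a square-root gap to explain.

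However, your proposed fix does not close that gap. The step ``measure the query register at a uniformly chosen step to extract a trapdoor-finder'' \emph{is} the O2H puncturing argument; the extracted finder succeeds with probability roughly $\epsilon^2/(4T^2)$, not linearly in $\epsilon$. Plugging that into BBBV's $P_{\mathrm{search}}\leq O(T^2/2^n)$ gives $\epsilon^2/T^2 \leq O(T^2/2^n)$, i.e.\ $\epsilon = O(T^2\cdot 2^{-n/2})$ --- still off by the same square root, and in fact worse than the direct O2H bound for $T>1$. So the ``search-then-BBBV'' route is not genuinely an alternative to O2H here; it is O2H restated, and inherits its loss. If the tight $O(T^2/2^n)$ is to be proved, the compressed-oracle (or a direct progress-measure) argument you mention in passing is the one that has to carry the weight --- one must bound $\|\,\mathbb{E}_s[\rho_T^{g_s}] - \rho_T^{h}\,\|_1$ directly, exploiting cancellation across the $2^n$ orthogonal subspaces indexed by $s$, rather than bounding $\mathbb{E}_s\|\rho_T^{g_s}-\rho_T^{h}\|_1$, which is where the $2^{-n/2}$ comes from. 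That calculation is not in your sketch, so as written your argument establishes the weaker $O(T\cdot 2^{-n/2})$ bound (which still suffices for the qualitative claim that the construction separates reset indifferentiability from pre-computation, but not the stated constant in the theorem).
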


\begin{proof}
    The interface for $\pubpriv R = (\pub R, \priv R[\pub R])$ is simply two oracles for the same random function. The simulated interface $(\algo S[\pub C], \priv C[\pub C])$ is two oracles for a random function, but with a random trapdoor $s$ such that inputs that end with $s$ are easy to invert. We know that $O_g$ and $O_h$ are quantum query indistinguishable in $T$ queries for any $\epsilon=O(T^2/2^n)$ \cite{unruh14revocable}, and the interface above is exactly the interface exposed to the adversary in the indistinguishability game (except two copies of each oracle are exposed; this is straightforward to simulate with one copy). This shows quantum statistical indifferentiability. The simulator can be seen to not depend on the distinguisher and be stateless, hence strong and reset.
\end{proof}

This proof straightforwardly implies the following corollary, for any choice of either bracketed item.

\begin{corollary}
    Construction $\mathsf C$ is $(T, \epsilon)$ strong $\langle$ classical | quantum $\rangle$ statistical reset indifferentiable $\langle$ with | without $\rangle$ shared randomness from construction $\mathsf R$ for any $\epsilon=O(T^2/2^n)$.
\end{corollary}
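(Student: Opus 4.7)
The plan is to observe that the theorem's proof already gives us all four variants essentially for free, because the simulator is trivial and deterministic, and because classical queries are a special case of quantum ones. I would structure the proof as a short reduction in two orthogonal directions.

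First, for the classical versus quantum axis, I would note that any classical $T$-query distinguisher can be simulated by a quantum $T$-query distinguisher that only queries computational-basis states and measures immediately. Hence the quantum indistinguishability bound $O(T^2/2^n)$ from the theorem transfers verbatim. (One could alternatively give a direct and even tighter classical bound: a classical $T$-query algorithm distinguishes $g$ from $h$ only if one of its queries is of the form $x\Vert s$, an event of probability at most $T/2^n$ over the uniformly random trapdoor $s$. This gives $\epsilon = O(T/2^n)$, which is stronger than what is claimed, and in particular is consistent with the stated $O(T^2/2^n)$ bound.)

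Second, for the shared-randomness axis, the key observation is that the simulator $\algo{S}[\pub{R}]$ constructed in the theorem is entirely deterministic: on query $x$ it outputs $h(x)$ by forwarding to its oracle $\pub{R} = O_h$. Since $\algo{S}$ uses no randomness of its own, the distinction between the simulator having access to some external shared randomness string (as in the ``with shared randomness'' variant) and no such string (as in the ``without shared randomness'' variant) is vacuous here. Formally, the same simulator witnesses indifferentiability in both models, with the identical bound.

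Combining the two observations across all four bracketed choices yields the corollary, with the same quantitative bound $\epsilon = O(T^2/2^n)$ inherited from the theorem, and with the same stateless, distinguisher-independent simulator (so strong and reset are preserved as well). There is no real obstacle here: the only subtlety worth flagging is that the classical-query reduction must preserve statelessness and the ``strong'' property of the simulator, which it does because we do not modify $\algo{S}$ at all, we merely restrict the class of distinguishers it must fool.
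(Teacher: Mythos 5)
Your proof is correct and matches the paper's (extremely terse) justification: the paper simply remarks that the corollary is weaker than the theorem because the deterministic simulator can ignore any shared randomness it is given, and that classical distinguishers are a special case of quantum ones, with the classical bound even improvable to $O(T/2^n)$. Your two-axis decomposition (classical vs.\ quantum distinguishers; with vs.\ without shared randomness for a simulator that uses no randomness at all) is exactly the intended argument, just spelled out more explicitly.
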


Observe that this corollary is weaker than the one shown above, as the (classical) simulator can simply ignore it's randomness. The security bound can be strengthed in the case of a classical adversary. Observe also that, given advice $s \in \bit^n$ (which depends on $\pubpriv C$, e.g. found by an unbounded adversary querying $\mathsf C$), it is straightforward to invert $g$ with no queries. However, $h$ remains hard to invert even with a much larger amount of advice, e.g. an adversary can succeed with constant probability having $S$ qubits of advice and $T$ quantum queries only when $ST+T^2 = \tilde \Omega(2^n)$ \cite{chung2020tight}.

\section{Indifferentiability with Pre-computation}
\label{section:indiff-with-pre}

Given the prior counterexample, we see that all but the strongest notions of indifferentiability are insufficient to inherit security for games that allow for adversaries with pre-computed advice (e.g. space-time tradeoff lower bounds). We define in this section a notion which captures any security game allowing unbounded pre-computation.

\paragraph{Strong indifferentiability with pre-computation.} To define the strong notion of indifferentiability with pre-computation, we will have both a pair of fixed simulators $\algo S=(\algo S_0, \algo S_1)$ and arbitrary distinguishers $\algo D=(\algo D_0, \algo D_1)$, where the first in each tuple is unbounded/offline and the second in each tuple is bounded/online. In the ``real world'', the offline distinguisher $\algo D_0$ receives unbounded access to some interface $\pubpriv C$. It then forwards $S$ (qu)bits of advice to online distinguisher $\algo D_1$, which can make $T$ queries to $\priv C$ and $\pub C$, and then outputs a bit.

In the ``ideal world'', the offline simulator $\algo S_0$ receives unbounded access to the ideal interface $\pubpriv R$, which it then uses to implement an interface which offline distinguisher $\algo D_0$ has unbounded access to. Again $\algo D_0$ forwards $S$ (qu)bits of advice to the online distinguisher $\algo D_1$, but we also allow $\algo S_0$ to forward $S_{\mathrm{sim}}$ (qu)bits of advice to the online simulator $\algo S_1$. The online distinguisher $\algo D_1$ makes $T$ queries to $\priv R$, as well as an interface simulated by $\algo S_1$ which itself makes $T_{\mathsf{sim}}$ queries to $\pub R$. As before, the distinguisher outputs a bit.

\begin{definition}[Strong Indifferentiability with Pre-Computation]
Let $\lambda \in \N$ be the security parameter.
A cryptographic system $\mathsf{C}$ is strongly $(S,T,S_{\mathsf{sim}},T_{\mathsf{sim}},\epsilon)$-indifferentiable (with pre-computation) from a system $\mathsf{R}$, if there exists a pair of simulators $(\algo{S}_0,\algo{S}_1)$, where 
\begin{itemize}
    \item $\algo S_0$ is an (classical/quantum) query and computation unbounded algorithm that outputs at most $S_{\mathsf{sim}}$ (qu)bits, and

    \item $\algo S_1$ is an efficient (classical/quantum) algorithm making $T_{\mathsf{sim}}$ (classical/quantum) queries,
\end{itemize}
and a negligible function $\epsilon(\lambda)$ 
such that, for any pair of algorithms $(\algo D_0,\algo D_1)$, where 
\begin{itemize}
    \item $\algo D_0$ is an unbounded (classical/quantum) algorithm which outputs $S$-many (qu)bits and

    \item $\algo D_1$ is an efficient (classical/quantum) making at most $T$ queries to $\mathsf{C}$,
\end{itemize}
such that the following distinguishing property holds:
\begin{align*}
\Big| &\Pr\left[\algo D_1\Big[\mathsf{C}_\lambda^{\mathrm{priv}}[\mathsf{C}_\lambda^{\mathrm{pub}}],\mathsf{C}_\lambda^{\mathrm{pub}},\algo D_0[\mathsf{C}_\lambda]\Big]=1\right] - \\
&\Pr\left[\algo D_1\Big[\mathsf{R}_\lambda^{\mathrm{priv}},\algo{S}_1\big[\mathsf{R}_\lambda^{\mathrm{pub}},\algo{S}_0[\mathsf{R}_\lambda]_{\algo S}\big],\algo D_0[\algo S_0[\mathsf{R}_\lambda]_{\algo D}]\Big]=1\right]\Big| \leq \epsilon(\lambda).
\end{align*}
Here, we assume that $\algo D_1$ only has access to the interface $D_0$ via its output, i.e., it receives $S$ many (qu)bits of advice. 
\end{definition}

The loss of the simulator, $T_{\mathsf{sim}}$ and $S_{\mathsf{sim}}$, as well as the distinguishing advantage $\epsilon$, will enter into the bounds inherited through this notion. Naturally, the smaller these quantities are, the tighter the bounds. We leave our definitions general so as to allow inheriting the tightest possible bounds.

\paragraph{Weak indifferentiability with pre-computation.} The weak indifferentiability with pre-computation game is similar to strong, except the simulator $\algo S=(\algo S_0, \algo S_1)$ now depends on distinguisher $\algo D = (\algo D_0, \algo D_1)$. This allows us to consider offline simulators $\algo S_0$ which prepare both the advice for the online simulator $\algo S_1$, as well as for the online distinguisher $\algo D_1$. This can reduce to the notion of strong indifferentiability when the offline simulator $\algo S_0$ internally runs offline distinguisher $\algo D_0$. Our definition is depicted in Figure \ref{fig:indiff-with-precomp}.

\begin{definition}[Weak Indifferentiability with Pre-Computation]
Let $\lambda \in \N$ be the security parameter.
A cryptographic system $\mathsf{C}$ is weakly $(S,T,S_{\mathsf{sim}},T_{\mathsf{sim}},\epsilon)$-indifferentiable (with pre-computation) from a system $\mathsf{R}$, if, for any pair of algorithms $(\algo D_0,\algo D_1)$, where 
\begin{itemize}
    \item $\algo D_0$ is a query and computation unbounded (classical/quantum) algorithm which outputs $S$-many (qu)bits and

    \item $\algo D_1$ is an efficient distinguisher (with binary output) making at most $T$ queries to $\mathsf{C}$,
\end{itemize}
there exists a pair of (classical/quantum) simulators $(\algo{S}_0,\algo{S}_1)$, where 
\begin{itemize}
    \item $\algo S_0$ is a query and computation unbounded (classical/quantum) algorithm that outputs $S_{\mathsf{sim}}$ (qu)bits, and

    \item $\algo S_1$ is an efficient (classical/quantum) algorithm making $T_{\mathsf{sim}}$ (classical/quantum) queries,
\end{itemize}
and a negligible function $\epsilon(\lambda)$ such that the following holds:
\begin{align*}
\Big| &\Pr\left[\algo D_1\Big[\mathsf{C}_\lambda^{\mathrm{priv}}[\mathsf{C}_\lambda^{\mathrm{pub}}],\mathsf{C}_\lambda^{\mathrm{pub}},\algo D_0[\mathsf{C}_\lambda]\Big]=1\right] - \\
&\Pr\left[\algo D_1\Big[\mathsf{R}_\lambda^{\mathrm{priv}},\algo{S}_1\big[\mathsf{R}_\lambda^{\mathrm{pub}},\algo{S}_0[\mathsf{R}_\lambda]_{\algo S}\big],\algo{S}_0[\mathsf{R}_\lambda]_{\algo D}\Big]=1\right]\Big| \leq \epsilon(\lambda).
\end{align*}
Here, we assume that $\algo D_1$ only has access to the interface $D_0$ via its output, i.e., it receives $S$ many (qu)bits of advice.
\end{definition}

\paragraph{Additional Variants.} Moreover, we consider the following variants which apply to both weak and strong indifferentiability with pre-computation:
\begin{itemize}
    \item \textbf{Computational/statistical/perfect:} \emph{Computational} indifferentiability requires the distinguisher $\algo D_1$ to be a computationally efficient algorithm. \emph{Statistical} indifferentiability requires $\algo D_1$ to make a bounded number of queries. Finally, \emph{perfect} indifferentiability refers to the case when $\algo D_1$ is completely unbounded.

    \item \textbf{Computational/statistical simulation:}
    Indifferentiability with \emph{computational simulation} requires $\algo S_1$ to be a computationally efficient simulator. Indifferentiability with \emph{statistical simulation} requires $\algo S_1$ to be only query-efficient (but computation unbounded).

    \item \textbf{Shared randomness:} In \emph{shared randomness} indifferentiability, the simulators $\algo S_0$ and $\algo S_1$ have access to the same arbitrary-sized set of random coins $\mathsf{SR} \in \bit^*$. Note that we do not reveal this randomness to the distinguisher.

    \item \textbf{Classical/quantum:} \emph{Quantum} indifferentiability captures the security against quantum distinguishers (and also simulators) making classical or quantum queries to their oracles, whereas \emph{classical} indifferentiability only considers classical algorithms that make classical queries. In the context of \emph{quantum} indifferentiability, we also distinguish between \emph{classical} and \emph{quantum} pre-computation, i.e., whether $\algo D_1$ receives classical or quantum advice from $\algo D_0$.
\end{itemize}

\begin{figure}[t]
    \centering
    \includegraphics[width=0.5\linewidth]{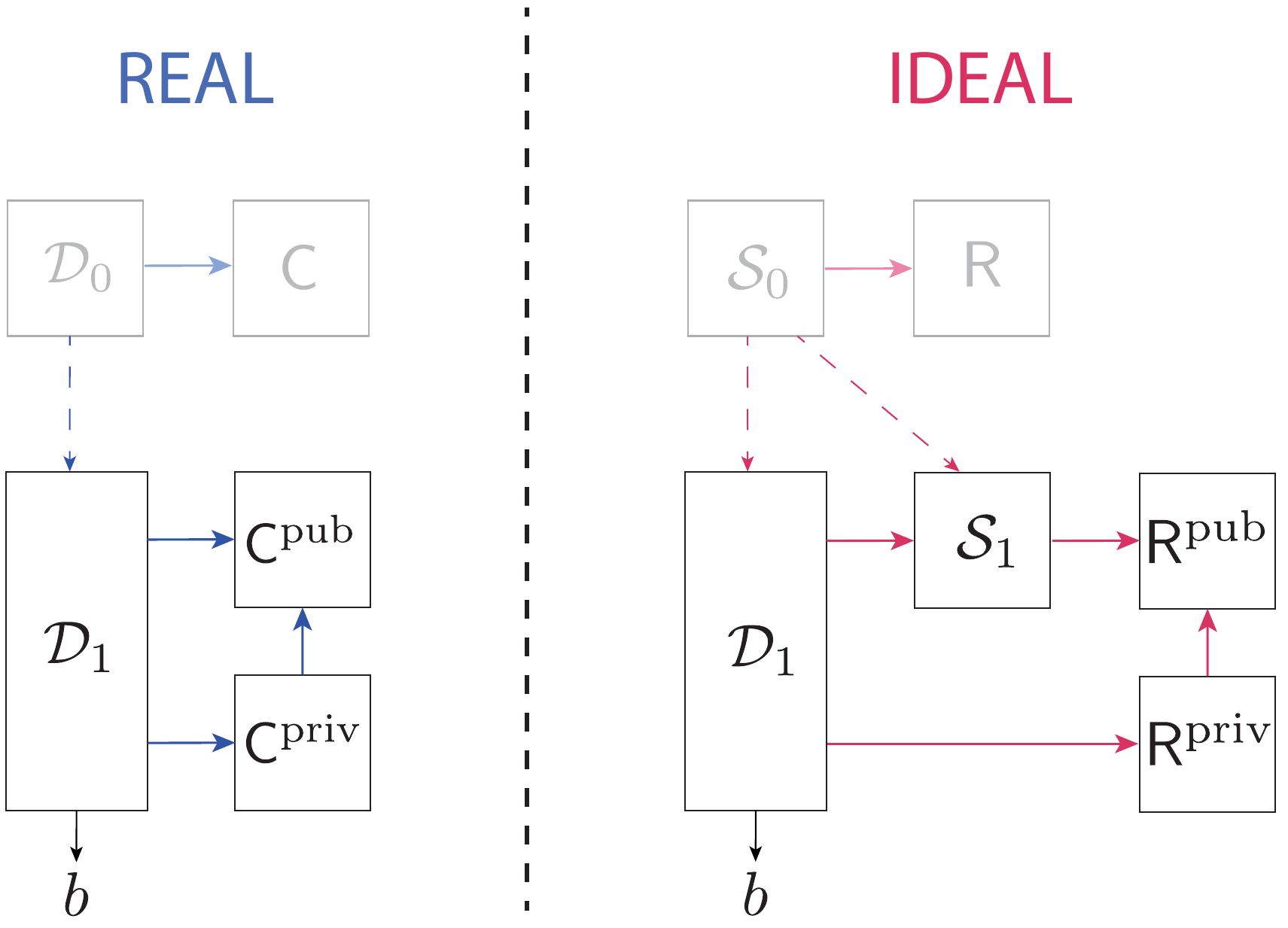}
    \caption{Schematic representation of weak indifferentiability with pre-computation. Arrows denote access to the pointed to interface, and washed out colors denote inefficient pre-computation, i.e. entities with unbounded access. Dotted arrows denote forwarded advice.}
    \label{fig:indiff-with-precomp}
\end{figure}

\subsection{Perfect reset indifferentiability suffices for pre-computation}

In this section, we observe that if two constructions $\pubpriv R$, $\pubpriv C$ are \textit{perfectly} reset indifferentiable, even for query and computation unbounded adversaries, then any multi-stage security game with (some or all) adversaries unbounded that is secure in the $\pubpriv R$ model will also be secure in the $\pubpriv C$ model. This includes our security games with pre-computation as a special case. Further, it holds even in the case where an unbounded adversary can distinguish with only negligible advantage $\epsilon$ (whereas ``perfect indifferentiability'' usually requires exactly zero advantage)---we call this notion $\epsilon$-perfect indifferentiability.

\begin{lemma}
    Suppose that construction $\pubpriv C$ is $\epsilon$-perfect reset indifferentiable from $\pubpriv R$, and any choice of the remaining variants. Suppose that the simulator makes (at most) $T_{\mathrm{sim}}$ queries to $\pub R$ to implement $T$ of the distinguisher's queries. Then $\pubpriv C$ is $\indiffparams$ indifferentiable with precomputation from $\pubpriv R$, for the same choice of remaining variants, and for $S_\mathrm{sim}=0$, $T_{\mathrm{sim}}$ and $\epsilon$ as defined above, and any number of distinguisher queries $T$ and distinguisher advice size $S$.
    \label{lem:perfect-reset-implies-precomp}
\end{lemma}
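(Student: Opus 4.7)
The plan is a direct reduction. I would take any pre-computation distinguisher $(\algo D_0, \algo D_1)$ and stitch it into a single reset distinguisher $\algo D$, and conversely build the pre-computation simulator pair $(\algo S_0, \algo S_1)$ from two independent invocations of the stateless reset simulator $\algo S$ guaranteed by the hypothesis. The key conceptual point, which is both what makes the reduction work and what justifies $S_{\mathrm{sim}}=0$, is that a stateless simulator carries no information between invocations, so there is nothing for the offline simulator to forward to the online simulator.

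Concretely, I would set $\algo S_0[\mathsf{R}_\lambda]_{\algo S}$ to be the empty string, and let $\algo S_0[\mathsf{R}_\lambda]_{\algo D}$ expose to $\algo D_0$ the interface $\bigl(\mathsf{R}^{\mathrm{priv}}_\lambda, \algo S[\mathsf{R}^{\mathrm{pub}}_\lambda]\bigr)$, implemented using $\algo S_0$'s unbounded access to $\mathsf{R}_\lambda$. Let $\algo S_1[\mathsf{R}^{\mathrm{pub}}_\lambda]$ be a freshly instantiated copy of the same reset simulator $\algo S[\mathsf{R}^{\mathrm{pub}}_\lambda]$, which by hypothesis uses at most $T_{\mathrm{sim}}$ queries to $\mathsf{R}^{\mathrm{pub}}_\lambda$ while answering $\algo D_1$'s $T$ queries. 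For the reverse direction, define a single reset distinguisher $\algo D$ that first runs $\algo D_0$ on its own oracle access, collects the $S$-bit advice $s$, then runs $\algo D_1(s)$ on the same oracle access and outputs whatever $\algo D_1$ outputs. Because perfect reset indifferentiability (as used here) is assumed to hold even for computation- and query-unbounded distinguishers, the unboundedness of $\algo D_0$ is not a problem.

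With this setup I would then verify that the two joint distributions coincide with the corresponding pre-computation distributions. In the real world, $\algo D$ attached to $\pubpriv C$ is syntactically identical to $(\algo D_0, \algo D_1)$ attached to $\pubpriv C$ in the pre-computation experiment, since both share the same oracle throughout. In the ideal world, $\algo D$ attached to $\bigl(\priv R, \algo S[\pub R]\bigr)$ produces a transcript in which every simulator invocation is independent by statelessness; this matches exactly the ideal pre-computation world, where $\algo D_0$'s queries are served by the interface that $\algo S_0$ builds from one fresh copy of $\algo S$, and $\algo D_1$'s queries are served by $\algo S_1$, another fresh copy of $\algo S$. Hence the pre-computation distinguishing advantage of $(\algo D_0, \algo D_1)$ equals the reset distinguishing advantage of $\algo D$, which is at most $\epsilon$ by hypothesis, yielding the desired $\indiffparams$-indifferentiability with pre-computation.

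The main obstacle, though modest, lives in the third step: rigorously justifying that statelessness of $\algo S$ actually decouples offline and online simulation, so that two independent copies of $\algo S$ produce the same joint distribution as a single copy queried by both phases of $\algo D$. This is exactly where the reset (as opposed to plain) hypothesis is essential, and it is also why the argument gives the clean bound $S_{\mathrm{sim}}=0$; if the reset simulator had to carry state across queries, $\algo S_0$ would need to package that state as advice and the lemma would instead produce a nontrivial $S_{\mathrm{sim}}$. Any remaining variants (classical/quantum, shared randomness, computational vs.\ statistical, etc.) transfer transparently because the reduction never touches the internals of $\algo D_0$, $\algo D_1$, or $\algo S$.
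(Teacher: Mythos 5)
Your reduction matches the paper's approach for \emph{strong} indifferentiability: stitch $(\algo D_0,\algo D_1)$ into a single reset distinguisher $\algo D$, set $\algo S_0'$ to expose the interface $(\priv R, \algo S[\pub R])$ to $\algo D_0$, set $\algo S_1' = \algo S[\pub R]$, and observe that statelessness of $\algo S$ means no advice flows between the two stages, giving $S_{\mathsf{sim}}=0$. However, there are two points worth flagging.

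First, one of the ``remaining variants'' is strong vs.\ weak, and your construction is strong-case only. In the weak pre-computation game, $\algo S_0[\mathsf{R}_\lambda]_{\algo D}$ is not an interface handed to $\algo D_0$; it is the advice string that $\algo D_1$ receives \emph{directly}, with $\algo D_0$ nowhere in the ideal-world expression. So $\algo S_0'$ cannot merely expose $(\priv R, \algo S[\pub R])$; it must internally run $\algo D_0$ against that simulated interface and output whatever advice $\algo D_0$ produces. The paper also lets the reset simulator $\algo S$ depend on the stitched distinguisher $\algo D$ in this case, since weak reset indifferentiability only guarantees a per-distinguisher simulator. These are exactly the modifications the paper makes in its second case, and they are not covered by the argument you gave.

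Second, your framing of the ``main obstacle''---that ``two independent copies of $\algo S$ produce the same joint distribution as a single copy queried by both phases''---is not quite right and is false as stated if $\algo S$ is randomized. What statelessness buys you is that no \emph{state} needs to be carried across phases (hence $S_{\mathsf{sim}}=0$); it does \emph{not} license resampling $\algo S$'s random coins independently between stages. The correct observation is that $\algo S_0'$ and $\algo S_1'$ run literally the same $\algo S$ with the same coins, where any such coins are handled by the shared-randomness variant of the target notion (the paper makes exactly this note). When shared randomness is disallowed, $\algo S$ is deterministic and the distinction is vacuous; either way, ``independent copies'' is not the right justification.
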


\begin{proof}
    We will prove the claim separately for strong and weak indifferentiability, though the proofs proceed analogously for any choice of the remaining variants.\begin{itemize}
        \item \emph{(Case 1)} Strong indifferentiability with pre-computation. Let $\algo S[\pub R]$ be the simulator that achieves $\epsilon$-perfect reset indifferentiability from $\pubpriv R$ (note that $\algo S$ may also be a function of some shared randomness). To construct a simulator in the strong indifferentiability with pre-computation setting $\algo S' = (\algo S_0', \algo S_1')$, we simply identify $\algo S_0'[\pubpriv R] = (\algo S[\pub R], \priv R)$ and $\algo S_1'[\pub R] = \algo S[\pub R]$. We know that $\algo S$ is stateless, so no advice needs to be passed from $\algo S_0'$ to $\algo S_1'$ while running $\algo S$. Any distinguisher $\algo D' = (\algo D_0', \algo D_1')$ in the strong indifferentiability with pre-computation game against $\algo S$ can now be identified with the distinguisher $\algo D = \algo D'$ in the strong $\epsilon$-perfect reset indifferentiability game (i.e. $\algo D$ simply runs $\algo D_0'$ followed by $\algo D_1'$, passing advice internally as needed, and outputting the result). By the reset indifferentiability of $\pubpriv C$ from $\pubpriv R$, this will distinguish with advantage at most $\epsilon$\, and hence $\algo D'$ distinguishes with advantage at most $\epsilon$.
        \item \emph{(Case 2)} Weak indifferentiability with pre-computation. Let $\algo D' = (\algo D_0', \algo D_1')$ be a distinguisher for the indifferentiability with pre-computation game. Identify $\algo D'$ with a distinguisher $\algo D$ in the $\epsilon$-perfect reset indifferentiability game of $\pubpriv C$ from $\pubpriv R$, specifically the distinguisher $\algo D$ which runs $\algo D'$ internally in two stages, forwarding advice where necessary. Let $\algo S[\pub R]$ be a stateless simulator such that $\algo D$ achieves advantage at most $\epsilon$ when run using $\algo S$.
        
        We construct $\algo S' = (\algo S_0', \algo S_1')$ in the indifferentiability with pre-computation game as $\algo S_0'[\pubpriv R]=\algo D_0'[\algo S[\pub R], \priv R]_{\algo D}$ (meaning the offline simulator runs the offline distinguisher with simulator $\algo S$ and forwards the advice created to $\algo D_1'$, and nothing to online simulator $\algo S_1'$) and $\algo S_1'[\pub R] = \algo S[\pub R]$. By the $\epsilon$-perfect reset indifferentiability achieved by $\algo S$ we know that $\algo D$ will distinguish with advantage at most $\epsilon$, which implies that $\algo D'$ will distinguish with advantage at most $\epsilon$.
    \end{itemize}
    In neither case did we place any constraints on the size of the advice from offline to online distinguisher, nor bound the number of queries of online distinguisher, so this holds for any $S, T$. The simulator overhead is clearly at most $T_{\mathrm{sim}}$, as the simulator $\algo S$ receives $T$ online queries in each case, and there is no advice passed between simulators so $S_{\mathrm{sim}}=0$.
\end{proof}

Our proof of indifferentiability with pre-computation and shared randomness of the one-round sponge from a random oracle---Section \ref{sec:sponge-indiff}---will first show that the one-round sponge is perfect reset indifferentiability  with shared randomness (a strengthening of \cite{Zhandry21}, Theorem 10), using a symmetrization argument. However, it is unclear how to remove the shared randomness from the reset indifferentiability game---in particular, techniques based on extracting randomness from an oracle such as used in \cite{Zhandry21}, do not work against an adversary that can learn the full oracle, as in our setting. Restricting to indifferentiability with pre-computation as defined in Section \ref{section:indiff-with-pre}, we next show how to remove the shared randomness from this definition, as needed for the composition theorem.

\subsection{Removing shared randomness from weak indifferentiability}

A useful fact we will prove here is that weak (or strong) indifferentiability with pre-computation, statistical simulation, and shared randomness (and any choice of the remaining variants) implies weak indifferentiability with pre-computation \textit{without} shared randomness and statistical simulation (for the same choice of remaining variants), up to a single bit of loss in the advice size.

\begin{lemma}
    Suppose that $\pubpriv C$ is weakly $\indiffparams$ indifferentiable with pre-computation from $\pubpriv R$ with statistical simulation, shared randomness, as well as with a $\langle{\text{computational | statistical | perfect}}\rangle$ $\langle{\text{classical | quantum}}\rangle$ distinguisher. Then, the construction $\pubpriv C$ is weakly $(S, T, S_{\mathrm{sim}} + 1, T_{\mathrm{sim}}, \epsilon)$ indifferentiable with precomputation from $\pubpriv R$ with statistical simulation, no shared randomness, and $\langle{\text{computational | statistical | perfect}}\rangle$ $\langle{\text{classical | quantum}}\rangle$ distinguisher (for the same choice of variants as in the premise).
    \label{lem:remove-sr}
\end{lemma}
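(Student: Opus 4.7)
The plan is to use a convex-combination argument: identify two specific instantiations of the shared randomness and let $\algo S_0'$ pass a single bit to $\algo S_1'$ selecting which instantiation both stages should use, tuned so that the resulting mixture matches the averaged shared-randomness behavior to within $\epsilon$. Crucially, we exploit the fact that in the weak model the simulator may depend on the distinguisher, so all distinguisher-specific quantities chosen below can simply be hardcoded into the (non-uniform, unbounded-size) descriptions of $\algo S_0'$ and $\algo S_1'$.

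First, I would fix an arbitrary distinguisher $(\algo D_0, \algo D_1)$ and apply the hypothesis to obtain a family of shared-randomness simulators $\{(\algo S_0^{\mathsf{SR}}, \algo S_1^{\mathsf{SR}})\}$ whose expected distinguishing advantage is at most $\epsilon$. Writing the signed conditional advantage as $A_{\mathsf{SR}} := \Pr[\algo D_1=1 \mid \text{ideal}, \mathsf{SR}] - \Pr[\algo D_1=1 \mid \text{real}]$, the hypothesis gives $|\mathbb{E}_{\mathsf{SR}}[A_{\mathsf{SR}}]| \le \epsilon$. Pick $\mathsf{SR}^+, \mathsf{SR}^-$ approaching $\sup_{\mathsf{SR}} A_{\mathsf{SR}}$ and $\inf_{\mathsf{SR}} A_{\mathsf{SR}}$ respectively. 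Then either one of $|A_{\mathsf{SR}^+}|, |A_{\mathsf{SR}^-}|$ is already at most $\epsilon$ (and we take that value deterministically), or the two values straddle zero, in which case by continuity there is a mixing weight $p \in [0,1]$ with $p \cdot A_{\mathsf{SR}^+} + (1-p) \cdot A_{\mathsf{SR}^-} \in [-\epsilon, \epsilon]$.

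Next, I would construct the new simulators. The description of $\algo S_0'$ hardcodes $\mathsf{SR}^+, \mathsf{SR}^-, p$; on input $\pubpriv R$ it samples $b \in \bit$ with $\Pr[b=1]=p$, runs $\algo S_0^{\mathsf{SR}^b}[\pubpriv R]$ to produce advice $a_{\algo D}$ for $\algo D_1$ and $a_{\algo S}$ for $\algo S_1$, and forwards $(a_{\algo S}, b)$ of total size $S_{\mathsf{sim}}+1$. The new $\algo S_1'$, also hardcoding $\mathsf{SR}^+, \mathsf{SR}^-$, reads $b$ and simulates $\algo S_1^{\mathsf{SR}^b}$ verbatim, preserving the $T_{\mathsf{sim}}$ query bound and any efficiency properties. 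Conditioning on $b$, the joint distribution seen by the distinguisher is identical to the shared-randomness ideal world with $\mathsf{SR}=\mathsf{SR}^b$, so the overall signed advantage equals $p \cdot A_{\mathsf{SR}^+} + (1-p) \cdot A_{\mathsf{SR}^-}$, which lies in $[-\epsilon,\epsilon]$ by construction.

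Finally, I would check that all the listed variants—computational/statistical/perfect distinguisher and classical/quantum models—go through identically, since the argument only inspects the final binary output probability. A minor technical subtlety is that the supremum and infimum of $A_{\mathsf{SR}}$ may fail to be attained over unbounded shared-randomness strings, but since $A_{\mathsf{SR}} \in [-1,1]$ we can take arbitrarily close approximations and absorb a vanishing additive slack. The main conceptual obstacle worth flagging is that the naive approach—letting $\algo S_0'$ and $\algo S_1'$ each sample independent copies of $\mathsf{SR}$—destroys the correlations the shared-randomness simulator relies on (think of the sponge application, where both stages must agree on one random permutation); the one-bit trick sidesteps this by letting both stages deterministically agree, via the advice bit, on one of two pre-committed values of $\mathsf{SR}$.
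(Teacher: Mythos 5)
Your proposal is correct and uses the same key idea as the paper's proof: hard-code two candidate shared-randomness values into both simulator stages, pass a single advice bit selecting between them, and tune the mixing probability so the resulting ideal-world behavior matches the shared-randomness bound. The paper's version is marginally cleaner in that it picks the two $\mathsf{SR}$ values to straddle the averaged ideal-world acceptance probability $p$ (such values always exist exactly, since a mean is bracketed by values of the random variable), hitting $p$ precisely and thereby avoiding the supremum/infimum attainment slack you flag at the end.
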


The transformation is depicted in Figure \ref{fig:indiff-removing-sr}. 

\begin{proof}
    Let $\algo D = (\algo D_0, \algo D_1)$ be a distinguisher for the indifferentiability with pre-computation security game between $\pubpriv C$ and $\pubpriv R$, and let $\algo S = (\algo S_0, \algo S_1)$ be the simulator which witnesses the indifferentiability with pre-computation and shared randomness, incurring query loss $T_{\mathrm{sim}}$ and advice size loss $S_{\mathrm{sim}}$. Let $p$ be the probability that $\algo D$ outputs $1$ when run in the ``ideal'' world, i.e. with simulator $\algo S$ and system $\pubpriv R$. From the premise, we know that $p$ is at most $\epsilon$ away from the probability that $\algo D$ outputs $1$ when run in the ``real'' world, with no simulator and system $\pubpriv C$. We will construct an $\algo S' = (\algo S_0', \algo S_1')$ which uses no shared randomness and incurs query loss $T_{\mathrm{sim}}$ and advice size loss $S_{\mathrm{sim}} + 1$, and causes $\algo D$ to output $1$ with probability $p$ as well. Note that this suffices because any distinguisher for the game with shared randomness is equally well a distinguisher for the game without shared randomness and vice versa; the distinguisher interfaces match syntactically.

    Let us denote by $\algo S[\cdots, \mathsf{SR}]$ a simulator with shared randomness $\mathsf{SR} \in \bit^*$. We will consider simulators where this input is hard coded to value $\mathsf{SR}$ (recall that $\algo S$ is computationally unbounded, so this is valid). We then have two cases. \begin{itemize}
        \item \emph{(Case 1)} For any $\mathsf{SR} \in \bit^*$, we have \begin{align*}
            \Pr\left[\algo D_1\Big[\mathsf{R}_\lambda^{\mathrm{priv}},\algo{S}_1\big[\mathsf{R}_\lambda^{\mathrm{pub}},\algo{S}_0[\mathsf{R}_\lambda^{\mathrm{pub}}, \mathsf{SR}]_{\algo S}, \mathsf{SR}\big],\algo{S}_0[\mathsf{R}_\lambda^{\mathrm{pub}}, \mathsf{SR}]_{\algo D}\Big]=1\right] = p.
        \end{align*}
        In this case, we can simply choose any fixed value for the shared randomness $\mathsf{SR}$ and hard-code it into both offline and online simulators.
        \item \emph{(Case 2)} There are two values of shared randomness, $\mathsf{SR}_0, \mathsf{SR}_1 \in \bit^*$, such that
        \begin{align*}
            \Pr\left[\algo D_1\Big[\mathsf{R}_\lambda^{\mathrm{priv}},\algo{S}_1\big[\mathsf{R}_\lambda^{\mathrm{pub}},\algo{S}_0[\mathsf{R}_\lambda^{\mathrm{pub}}, \mathsf{SR}_0]_{\algo S}, \mathsf{SR}_0\big],\algo{S}_0[\mathsf{R}_\lambda^{\mathrm{pub}}, \mathsf{SR}_0]_{\algo D}\Big]=1\right] =& p_0 \\
            \Pr\left[\algo D_1\Big[\mathsf{R}_\lambda^{\mathrm{priv}},\algo{S}_1\big[\mathsf{R}_\lambda^{\mathrm{pub}},\algo{S}_0[\mathsf{R}_\lambda^{\mathrm{pub}}, \mathsf{SR}_1]_{\algo S}, \mathsf{SR}_1\big],\algo{S}_0[\mathsf{R}_\lambda^{\mathrm{pub}}, \mathsf{SR}_1]_{\algo D}\Big]=1\right] =& p_1
        \end{align*}
        and further $p_0 < p < p_1$. In this case, we hard-code both $\mathsf{SR}_0$ and $\mathsf{SR}_1$ into the offline and online simulators $\algo S_0', \algo S_1'$. Then, before running any other computation, $\algo S_0'$ samples a bit $s \in \bit$ such that \begin{align*}
            \Pr_{\algo S'}[s=1] =& \frac{p - p_0}{p_1-p_0}.
        \end{align*}
        After selecting bit $s$, the constructed offline simulator $\algo S_0'$ simply runs $\algo S[\cdots, {\mathsf{SR}_s}]$, i.e. the initial simulator with shared randomness hard-coded as dictated by $s$. In addition to whatever advice $\algo S_0$ would send, $\algo S_0'$ also sends the bit $s$ to $\algo S_1'$. Then, $\algo S_1'$ runs $\algo S_1[\cdots, {\mathsf{SR}_s}]$, again hard-coding the shared randomness as dictated by $s$. This clearly incurs only an overhead of one (qu)bit of advice. We now have \begin{align*}
            &\Pr\left[\algo D_1\Big[\mathsf{R}_\lambda^{\mathrm{priv}},\algo{S}_1^{'}\big[\mathsf{R}_\lambda^{\mathrm{pub}},\algo{S}_0^{'}[\mathsf{R}_\lambda^{\mathrm{pub}}]_{\algo S}\big],\algo{S}_0^{'}[\mathsf{R}_\lambda^{\mathrm{pub}}]_{\algo D}\Big]=1\right]\\
            &= p_0 \cdot \Pr_{\algo S'}[s=0] + p_1 \cdot \Pr_{\algo S'}[s=1] 
            = p,
        \end{align*}
        which proves the claim.
    \end{itemize}
\end{proof}

\begin{figure}
    \centering
    \includegraphics[width=0.6\linewidth]{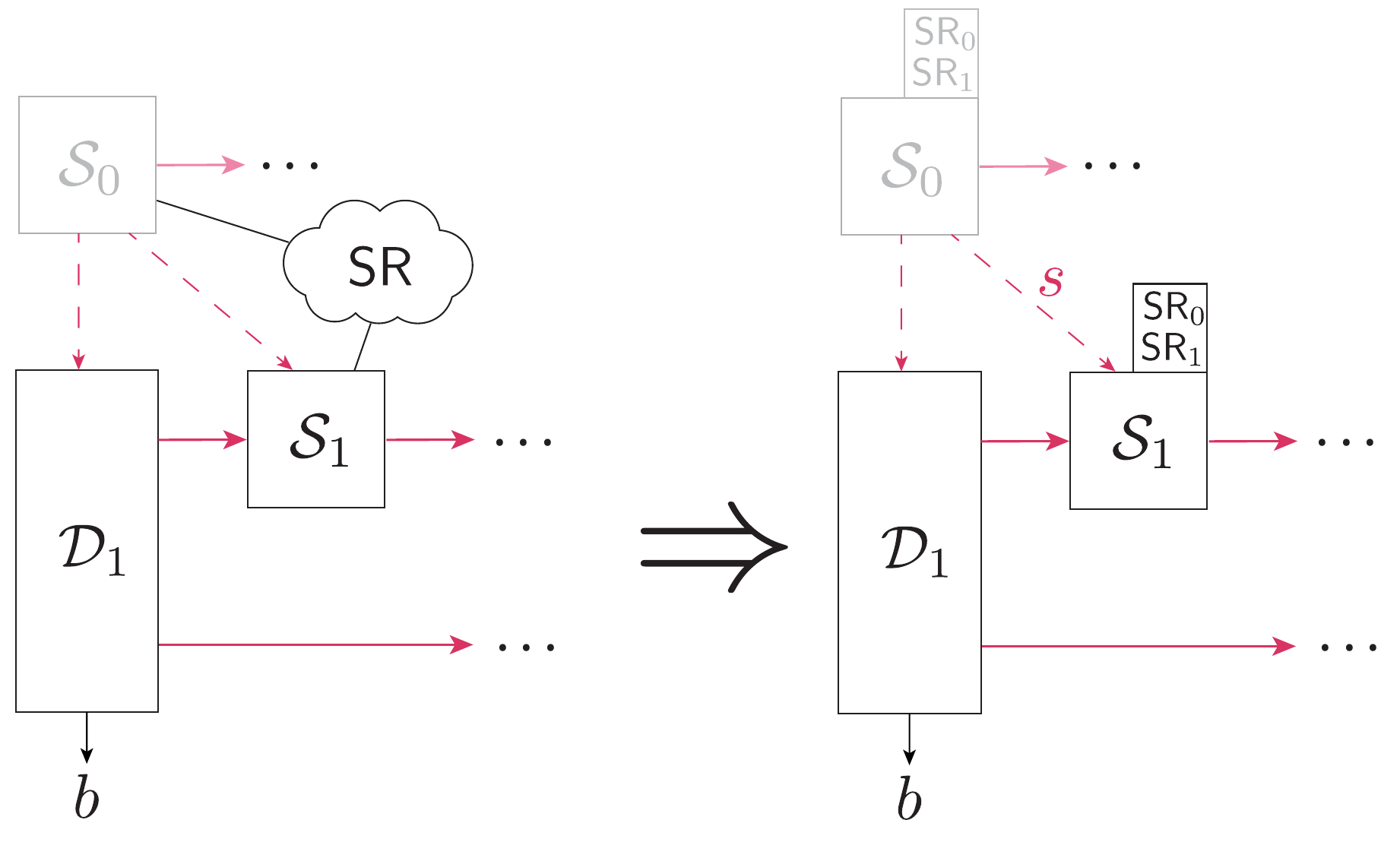}
    \caption{The reduction for removing shared randomness for weak indifferentiability. Two values of shared randomness are hard-coded into the simulator, which then uses bit $s$ to select between them.}
    \label{fig:indiff-removing-sr}
\end{figure}

\section{Composition Framework for Indifferentiability with Pre-Computation}
\label{sec:composition}

We show here that our proposed notion of indifferentiability with pre-computation implies composed security for a class of security games allowing pre-computing adversaries. This class of games includes the setting of most space-time tradeoffs as a special case.

\subsection{Security games with pre-computation}

For the remainder of this section, all objects will in fact be a family indexed by a security parameter $\lambda$; we drop this index for convenience. We will follow the indifferentiability model of Maurer et al. \cite{maurer04indiff}. In particular, let $\pubpriv C$ be a construction with a private interface $\priv C$ and a public interface $\pub C$ as defined in Section \ref{section:indiff-with-pre}. The components of a generic security game with pre-computation will be a cryptosystem $\algo P$, an enviroment $\algo E$, a pre-computation adversary $\algo A_0$, and an online adversary $\algo A_1$. For simplicity, we will quantify efficiency only in terms of number of queries to the interface $\pubpriv C$; all entities will be computationally unbounded. We further state our results for quantum queries and advice with weak indifferentiability, though one could define an analogous framework for classical queries and/or classical advice and/or quantum or classical computationally bounded entities and/or strong indifferentiability. The composition theorem proceeds similarly for all such cases. We do not consider composition theorems with shared randomness, as we will not use them.

In the following definitions, when we say that two objects ``interact'', we mean that they alternatively send data back and forth for some number of rounds. The number and order of rounds, as well as size and type (classical or quantum) of the data is a property of a specific security game. We will leave our definition general enough to capture any suitable security game. The only parameters we will need in a general game are the advice size $S$, and online query count $T$ (which counts queries made by $\algo A_1$ and cryptosystem $\algo P$).

\begin{definition}
    An offline adversary $\algo A_0$ is a (computational and query) unbounded quantum algorithm, which interacts with $\pubpriv C$ (both $\priv C$ and $\pub C$) for an arbitrary number of rounds, and prepares an advice state $\alpha$ that is $S$ qubits.
    \label{defn:offline-adv}
\end{definition}

\begin{definition}
    An online adversary $\algo A_1$ is an interactive quantum algorithm, which makes $T_1$ queries to the public interface $\pub C$, and interacts with the environment $\algo E$ and cryptosystem $\algo P$.
    \label{defn:online-adv}
\end{definition}

\begin{definition}
    A cryptosystem $\algo P$ is an interactive quantum algorithm, which makes $T_2$ queries to the private interface $\priv C$, and interacts with the environment $\algo E$ and online adversary $\algo A_1$.
    \label{defn:cryptosystem}
\end{definition}

\begin{definition}
    An environment $\algo E$ is an interactive quantum algorithm, which interacts with the cryptosystem $\algo P$ and online adversary $\algo A_1$. At the end of the experiment, $\algo E$ outputs a bit $b$.
    \label{defn:environment}
\end{definition}
In Definition \ref{defn:online-adv} and Definition \ref{defn:cryptosystem}, we require that $T=T_1+T_2$. We call the tuple $(\algo P, \pubpriv C, \algo E)$ an instance of the $\algo P$ cryptosystem in model $\pubpriv C$. We can also consider a different interface $\pubpriv R$ with a private interface $\priv R$ that syntactically matches $\priv C$ (e.g. if $\priv C$ is an oracle for a function from $\bit^* \rightarrow \bit^n$, then $\priv R$ is as well). We call the tuple $(\algo P, \pubpriv R, \algo E)$ an instance of the $\algo P$ cryptosystem in model $\pubpriv R$. We are now ready to state our composition theorem.

\begin{theorem}
    Suppose that construction $\pubpriv C$ is $(S, T, S_\mathsf{sim}, T_\mathsf{sim}, \epsilon)$ indifferentiable with pre-computation from construction $\pubpriv R$. Let $\algo A = (\algo A_0, \algo A_1)$ be an attacker in the $\pubpriv C$ model of $\algo P$ with advice size $S$ and online query count $T_1$, in a game where $\algo P$ makes $T_2$ queries to $\priv C$ such that $T_1+T_2 = T$. Then there is an attacker $\algo A' = (\algo A_0', \algo A_1')$ in the $\pubpriv R$ model of $\algo P$ with advice size $S+S_\mathsf{sim}$ and online query count $T_\mathsf{sim}$. This attacker satisfies
    \begin{align*}
        \Big|\Pr\left[\algo E\left[\algo P[\priv C], \algo A_1[\pub C, \algo A_0[\pubpriv C]]\right] = 1\right] -&\\ \Pr\left[\algo E\left[\algo P[\priv R], \algo A_1'[\pub R, \algo A_0'[\pubpriv R]]\right] = 1\right]\Big| &\leq \epsilon
    \end{align*}
    \label{thm:composition}
\end{theorem}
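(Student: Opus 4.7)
The plan is to reduce the composition claim to the weak indifferentiability guarantee by packaging the entire security experiment into a single distinguisher, and then reading off the new $\pubpriv R$-model attacker from the simulator that witnesses indifferentiability. First I fix $(\algo A, \algo P, \algo E)$ and build a distinguisher $\algo D=(\algo D_0,\algo D_1)$ for the indifferentiability game as follows. The offline part $\algo D_0$, which is query and computation unbounded, runs $\algo A_0$ on its interface and forwards the $S$-qubit advice $\alpha$ that is produced. The online part $\algo D_1$, given $\alpha$ together with access to a ``private'' oracle and a ``public'' oracle, internally instantiates $\algo P$ (routing its $T_2$ queries to the private oracle), $\algo A_1$ (routing its $T_1$ queries to the public oracle), and $\algo E$; it then orchestrates the multi-round interaction among $\algo E$, $\algo P$, and $\algo A_1$ and outputs whatever bit $\algo E$ produces. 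By construction $\algo D_1$ makes exactly $T_1+T_2=T$ oracle calls, matching the indifferentiability hypothesis.

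Invoking weak $\indiffparams$-indifferentiability against this $\algo D$ yields a simulator $\algo S=(\algo S_0,\algo S_1)$, where $\algo S_0$ outputs $S_\mathsf{sim}$ qubits of simulator-advice together with $S$ qubits of ``distinguisher-advice'', and $\algo S_1$ makes $T_\mathsf{sim}$ queries to $\pub R$. Unpacking the two sides of the indifferentiability inequality, the real-world quantity $\Pr[\algo D = 1]$ equals $\Pr[\algo E = 1]$ in the genuine $\pubpriv C$-model experiment with $\algo A$, whereas the ideal-world quantity equals $\Pr[\algo E = 1]$ in a modified $\pubpriv R$-model experiment in which $\algo P$ talks to $\priv R$ directly, $\algo A_1$'s public-interface queries are answered by $\algo S_1[\pub R,\algo S_0[\pubpriv R]_{\algo S}]$, and $\algo A_1$'s input advice is $\algo S_0[\pubpriv R]_{\algo D}$.

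To realise this ideal-world experiment as an honest $\pubpriv R$-model attack, I define $\algo A' = (\algo A_0',\algo A_1')$ as follows. The offline attacker $\algo A_0'$ runs $\algo S_0$ against its unbounded access to $\pubpriv R$ and emits the concatenation of the $S_\mathsf{sim}$-qubit simulator advice with the $S$-qubit distinguisher advice, for a total size of $S+S_\mathsf{sim}$. The online attacker $\algo A_1'$ splits this string, emulates $\algo A_1$ internally, and answers each of $\algo A_1$'s public-oracle queries by invoking $\algo S_1$ initialised with the simulator advice, forwarding $\algo S_1$'s calls to its own oracle $\pub R$; the total number of queries to $\pub R$ is then exactly $T_\mathsf{sim}$, as required. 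The joint view of $(\algo E,\algo P,\algo A')$ in the $\pubpriv R$-model game is syntactically identical to the ideal-world view of $\algo D$, so the two success probabilities coincide, and the $\epsilon$ gap is inherited directly from the indifferentiability bound.

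The one place where care is needed is that the composed game is interactive and multi-round, while the indifferentiability definition only speaks of a monolithic online distinguisher $\algo D_1$; this is resolved by letting $\algo D_1$ carry the joint state of $\algo E$, $\algo P$, and $\algo A_1$ and schedule their messages internally, which is fine since $\algo D_1$ need only satisfy a query bound (not a computational one). A second subtlety is that the \emph{weak} variant of the definition allows $\algo S$ to depend on $\algo D$, hence on $(\algo A,\algo P,\algo E)$; this causes no circularity because the theorem only claims the existence of one $\algo A'$ per choice of $(\algo A,\algo P,\algo E)$.
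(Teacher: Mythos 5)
Your proposal is correct and takes essentially the same approach as the paper: build the distinguisher $\algo D_0 = \algo A_0$ and $\algo D_1 = (\algo E, \algo P, \algo A_1)$, invoke weak indifferentiability to obtain $(\algo S_0, \algo S_1)$, and assemble $\algo A_0' = \algo S_0$ and $\algo A_1' = \algo A_1$ composed with $\algo S_1$. Your write-up is, if anything, slightly more careful about the logical order (construct $\algo D$ before invoking the existence of $\algo S$, which the weak variant needs), and your added remarks about packaging the multi-round interaction inside $\algo D_1$ and about non-circularity of the weak simulator's dependence on $\algo D$ are both sound and match the paper's implicit reasoning.
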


The construction of $\algo A$ is depicted in Figure \ref{fig:composition-theorem-constructed-adversary}, and the proof of correctness is depicted in Figure \ref{fig:composition-theorem-proof}.

\begin{proof}
    Let $\algo S[\pubpriv R] = (\algo S_0[\pubpriv R], \algo S_1[\priv R])$ be a simulator which is $(S, T, S_\mathsf{sim}, T_\mathsf{sim}, \epsilon)$-indifferentiable in the indifferentiability with pre-computation game against interface $\pubpriv C$, for a certain distinguisher $\algo D = (\algo D_0, \algo D_1)$ which will be defined later. We can construct adversary $\algo A'$ in the $\pubpriv R$ model of $\algo P$ from the adversary $\algo A$ in the $\pubpriv C$ model of $\algo P$ and the simulator $\algo S$ in a black-box way. In particular, the pre-processing adversary $\algo A_0'$ is simply $\algo S_0[\priv R]$, which prepares advice state $\alpha_{\algo A}$ of size $S$ and $\alpha_{\algo S}$ of size $S_\mathsf{sim}$. The online adversary is then the joint system of the online adversary and online simulator, $\algo A_1' = \algo A_1[S_1[\priv R, \alpha_{\algo S}], \alpha_{\algo A}]$. It is clear that this adversary $\algo A'$ uses $S+S_\mathsf{sim}$ qubits of advice and $T_\mathsf{sim}$ quantum queries (note that the original $T_1$ queries by the adversary are now queries to the simulator, and not to the construction; hence we only need count $T_\mathsf{sim}$).
    
    To show the success probability gap, let $\delta_1$ be the probability that the environment $\algo E$ outputs $1$ in the $\pubpriv C$ model of $\algo P$ with adversary $\algo A$, and let $\delta_2$ be the probability for the same environment $\algo E$ outputting $1$ in the $\pubpriv R$ model of $\algo P$ with adversary $\algo A'$. Note that the tuple $(\algo E, \algo P, \algo A_1)$ is a valid online distinguisher for the indifferentiability with pre-computation game, which we call $\algo D_1$. Similarly, $\algo A_0$ is a valid preprocessing distinguisher, which we call $\algo D_0$. Hence we have a distinguisher $\algo D = (\algo D_0, \algo D_1)$ for the indifferentiability with pre-computation game---this distinguisher $\algo D$ is the one which defines $\algo S$ earlier. Further, it will output $1$ with probability $\delta_1$ in the ideal world (from the definition of our security game), and with probability $\delta_2$ in the real world (from the construction of $\algo A'$). Hence, we must have $|\delta_1 - \delta_2| \leq \epsilon$, demonstrating the claim.
\end{proof}

\begin{figure}[h]
    \centering
    \includegraphics[width=0.3\linewidth]{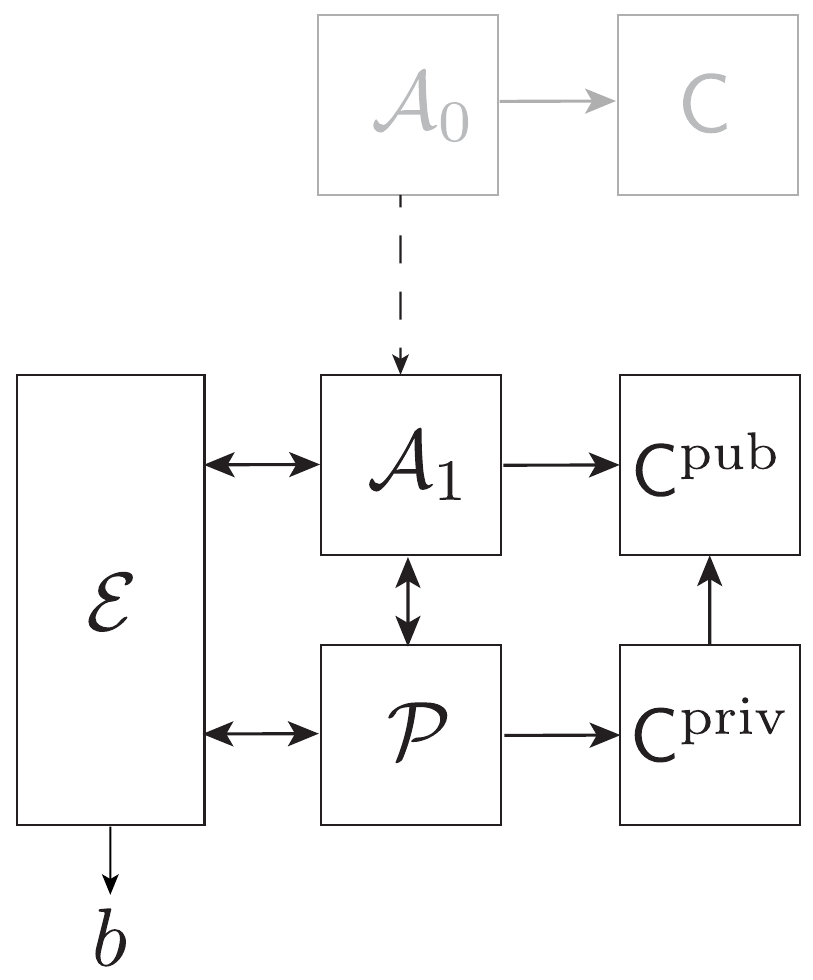}
    \caption{The adversary $\algo A=(\algo A_0, \algo A_1)$ in model $\pubpriv C$.}
    \label{fig:composition-theorem-base-adversary}
\end{figure}

\begin{figure}[h]
    \centering
    \includegraphics[width=0.4\linewidth]{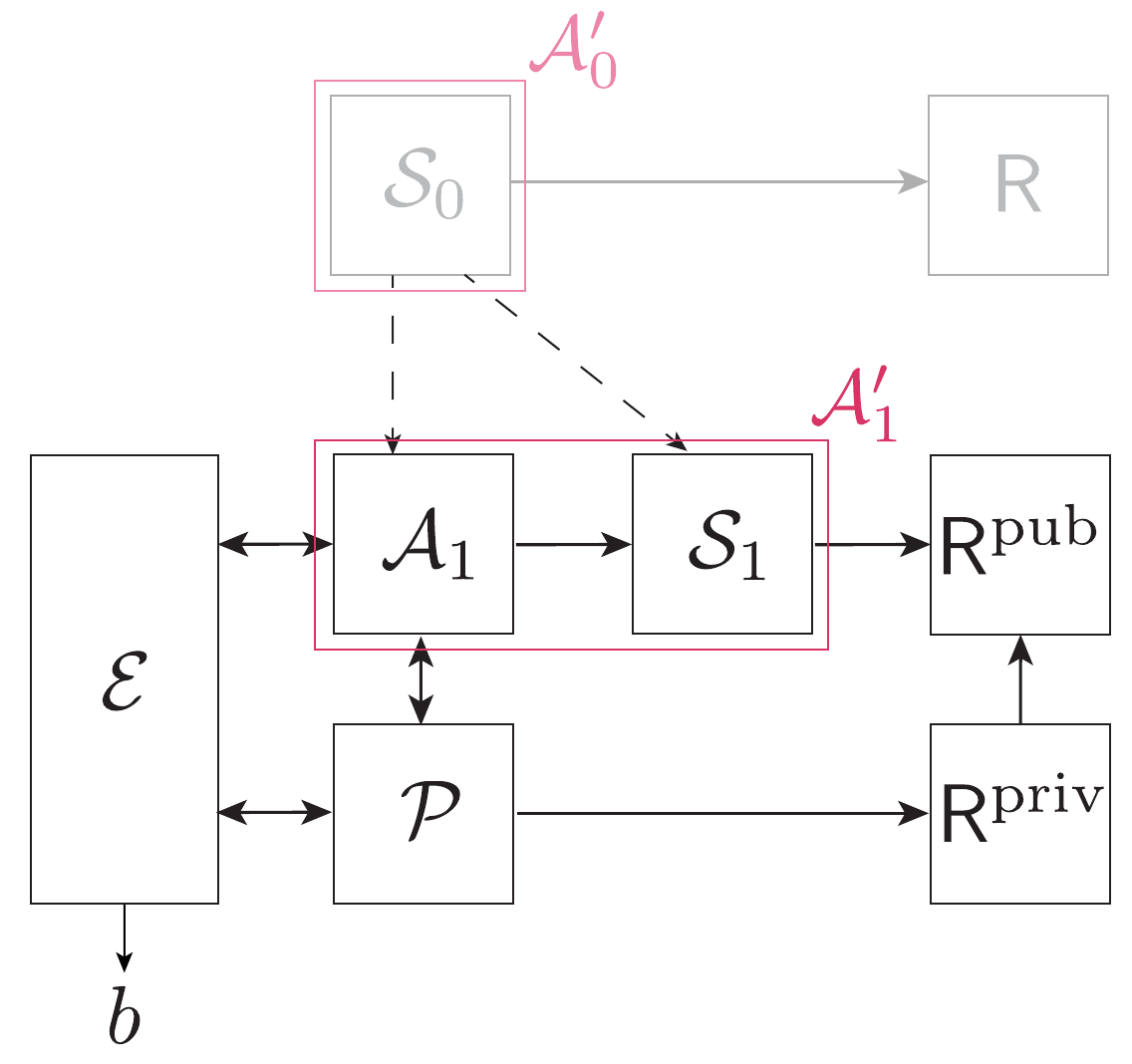}
    \caption{The adversary $\algo A'=(\algo A_0', \algo A_1')$ in model $\pubpriv R$, constructed from $\algo A$ in model $\pubpriv C$ and the simulator $\algo S$. The syntax of $\algo A$ is depicted in Figure \ref{fig:composition-theorem-base-adversary}.}
    \label{fig:composition-theorem-constructed-adversary}
\end{figure}

\begin{figure}[h]
    \centering
    \includegraphics[width=0.65\linewidth]{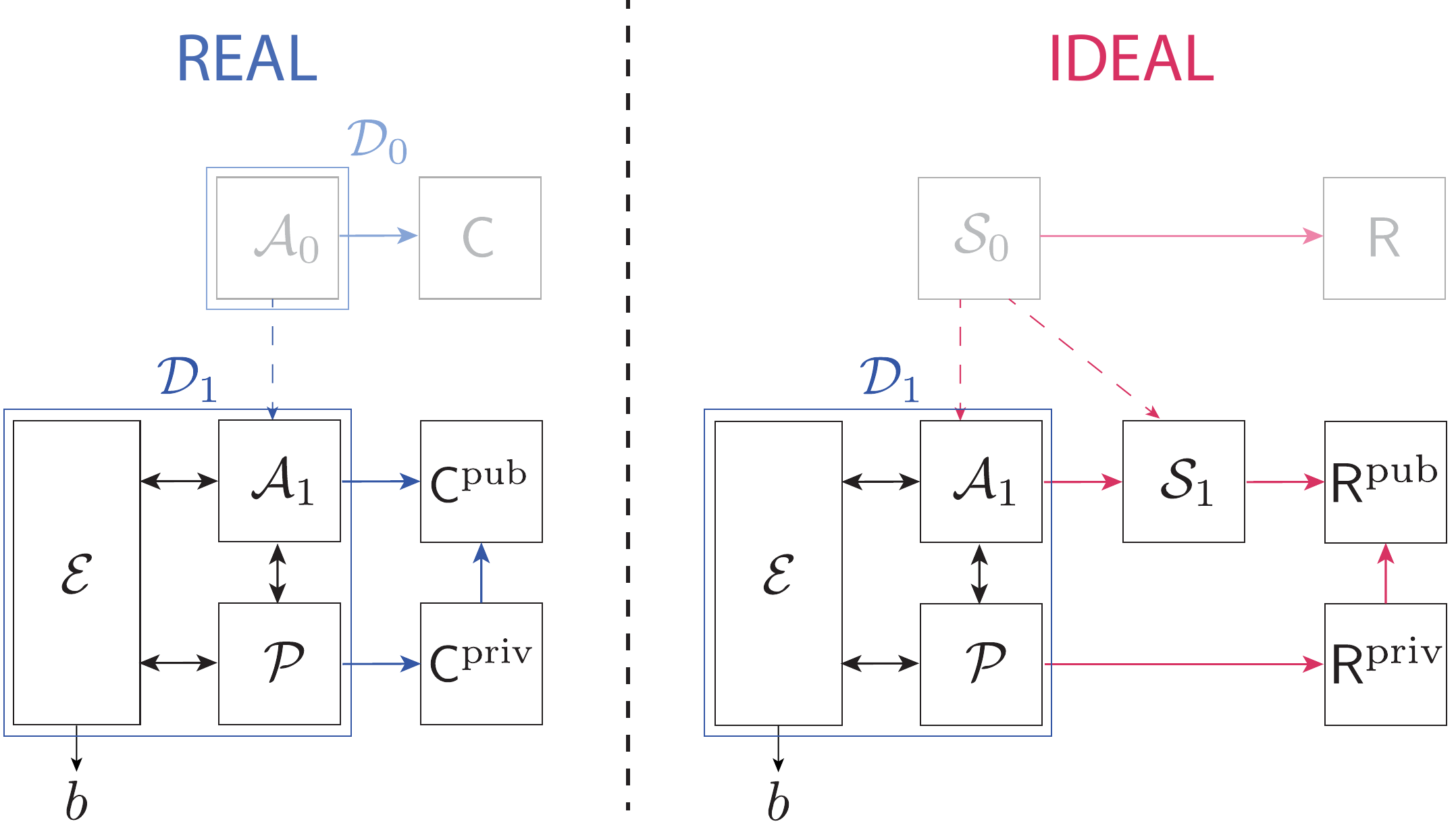}
    \caption{The reduction from the constructed adversary in the $\pubpriv R$ model to a distinguisher $\algo D = (\algo D_0, \algo D_1)$ for the indifferentiability game between $\pubpriv C$ and $\pubpriv R$.}
    \label{fig:composition-theorem-proof}
\end{figure}

\section{Sponge symmetrization}
\label{sec:improved-sym}

In this section we introduce the relevant background building up to our symmetrization lemma. This lemma will be the main technical component in showing that the single round sponge is indifferentiable with pre-computation from a random oracle, when the rate is smaller than the capacity.

\subsection{Group theory}

We first define relevant notions in group theory, and recall known results about the symmetric group $S_N$ on $N$ elements. For a more complete overview of the subject, we refer the reader to the work of James \cite{James1984}. This presentation follows \cite{carolan2024oneway}.


Let $S_N$ denote the symmetric group consisting of permutations which act on the set $[N]:=\{1, ...,N\}$. For a subset $A \subset [N]$, let $S_A$ denote the maximal subgroup of $S_N$ which fixes every element in the complement of $A$, i.e. $[N] \setminus A$. Now let $A_1,...,A_l$ be a partition of $[N]$ such that the disjoint union satisfies $\bigsqcup_{i \in [\ell]} A_i =[N]$.

\begin{definition}[Young subgroup]
    A subgroup $H$ of the symmetric group $S_N$ is a Young subgroup if it can be expressed as $H=S_{A_1} \times ... \times S_{A_l}$, where $\times$ denotes the internal direct product and the collection of subsets $\{A_i\}_{i \in [\ell]}$ forms a partition of $[N]$. \label{definition:youngSubgroup}
\end{definition}

The concept of a double coset, which we review below, will also be relevant.

\begin{definition}[Double cosets] Let $H,K$ be subgroups of a group $G$. The double cosets of $G$ under $(H, K)$, denoted $H \diagdown G \diagup K$, are the sets of elements which are invariant under left multiplication by $H$ and right multiplication by $K$. In particular, 
$$
H \diagdown G \diagup K = \big\{\{hxk\, : \,h \in H, k \in K\}\, : \,x \in G\big\}.$$ \label{definition:doubleCoset}
\end{definition}
\vspace{-2mm}
It is well-known that $G$ is the disjoint union of its double cosets for any subgroups $H,K \leq G$. We focus on the double cosets of the symmetric group $S_N$, specifically those generated by Young subgroups. These subgroups admit the following characterization, adapted from Jones \cite{Jones1996DoubleCosets} and James \cite{James1984}.

\begin{theorem}[\cite{Jones1996DoubleCosets}, Theorem 2.2]
    Let $H, K$ be Young subgroups of $S_N$, with corresponding partitions $A_1,...,A_l$ (for $H$) and $B_1,...,B_m$ (for $K$). Let $\pi \in S_N$ and $C=H\pi K$ be the corresponding double coset. Any other permutation $\pi' \in S_N$ is in $C$ if and only if for all $i \leq l, j \leq m$ we have $|A_i \cap \pi' B_j| = |A_i \cap \pi B_j|$.  \label{theorem:characterizationYoungDoubleCosets}
\end{theorem}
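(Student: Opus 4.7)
The plan is to use the defining feature of Young subgroups---$h \in H = S_{A_1} \times \cdots \times S_{A_l}$ fixes each $A_i$ setwise, and $k \in K = S_{B_1} \times \cdots \times S_{B_m}$ fixes each $B_j$ setwise---to show that the ``intersection matrix'' $M(\sigma) := \bigl(|A_i \cap \sigma B_j|\bigr)_{i,j}$ is a complete invariant of the double coset of $\sigma$. The forward direction ($\pi' \in H\pi K \Rightarrow M(\pi') = M(\pi)$) is essentially immediate: if $\pi' = h\pi k$ then $\pi' B_j = h\pi B_j$ since $kB_j = B_j$, and therefore $|A_i \cap \pi' B_j| = |h^{-1}A_i \cap \pi B_j| = |A_i \cap \pi B_j|$, where the last equality uses $h^{-1}A_i = A_i$.

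The reverse direction is where the content lies. Assume $M(\pi') = M(\pi)$ and write $m_{ij}$ for the common entry. For each fixed $j$, the family $\{\pi^{-1}(A_i) \cap B_j\}_{i=1}^{l}$ partitions $B_j$ into blocks of sizes $(m_{1j},\ldots,m_{lj})$, and $\{\pi'^{-1}(A_i) \cap B_j\}_{i=1}^{l}$ partitions $B_j$ into blocks of the same sizes. I would therefore choose, for each $j$, a permutation $k_j$ of $B_j$ sending $\pi'^{-1}(A_i) \cap B_j$ onto $\pi^{-1}(A_i) \cap B_j$ for every $i$, and assemble $k := k_1 k_2 \cdots k_m \in K$ (each $k_j$ acts trivially outside $B_j$, so the product really sits inside the Young subgroup). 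By construction, every $x$ with $\pi'(x) \in A_i$ satisfies $k(x) \in \pi^{-1}(A_i)$, hence $\pi(k(x)) \in A_i$. Thus $\pi k$ and $\pi'$ send each point into the same block $A_i$, so $h := \pi' (\pi k)^{-1}$ preserves every $A_i$ setwise and lies in $H$, giving $\pi' = h\pi k \in H\pi K$ as required.

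The main obstacle is not a deep one but rather a conceptual recognition: $M(\pi)$ records precisely the block sizes in the common refinement of $\{B_j\}$ with $\{\pi^{-1}(A_i)\}$, which is exactly the combinatorial data that a right $K$-action and left $H$-action rearrange without changing. Once that observation is in place, the block-by-block construction of $k$ is essentially forced, and $h$ falls out automatically as the residual permutation that preserves each $A_i$. The only care point is making sure the local rearrangements $k_j$ assemble into a genuine element of $K$, which follows because the $B_j$ form a partition of $[N]$.
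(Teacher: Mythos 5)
The paper does not actually prove this statement; it is cited verbatim from Jones's paper on double cosets (Theorem~2.2 there) and used as a black box. So there is no ``paper's proof'' to compare against. Your proof is correct and is essentially the standard argument: the forward direction is immediate from the setwise invariance of $A_i$ and $B_j$ under $H$ and $K$; for the converse, you correctly observe that $|\pi^{-1}(A_i)\cap B_j| = |A_i\cap\pi B_j| = m_{ij}$, so the two partitions of each $B_j$ induced by pulling back $\{A_i\}$ along $\pi$ and $\pi'$ have matching block sizes, which lets you build $k\in K$ block-by-block (the $B_j$ are disjoint, so the local permutations assemble into a genuine element of $K$), after which $h = \pi'(\pi k)^{-1}$ preserves each $A_i$ setwise by construction and hence lies in $H$. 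Since the paper relies on the cited reference rather than reproving the result, your self-contained argument is a useful addition rather than a departure.
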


Intuitively, the above characterization says that the double cosets defined by Young subgroups $(H, K)$ correspond to sets of permutations which look the same if one only considers how they distribute the elements of each $B_j$ among the different $A_i$'s. Two permutations $\pi, \pi'$ are in the same double coset if and only if for every $A_i, B_j$ both $\pi$ and $\pi'$ send the same number of elements from $B_j$ to $A_i$. This characterization combined with the following lemma is a key component of our reduction.

\begin{theorem}[\cite{wildonSymGroupRep}, Theorem 4.4] For any subgroups $H, K$ of a (finite) group $G$ with $x, g \in G$ both in the same $(H, K)$ double coset, there are exactly $|x^{-1}Hx \cap K|$ ways of choosing $h \in H$ and $k \in K$ such that $g=h x k$. \label{lemma:cardinalityDoubleCosets}
\end{theorem}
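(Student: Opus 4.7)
The plan is to recognize this statement as a direct application of the orbit-stabilizer theorem to the natural two-sided action of $H \times K$ on $G$. First I would define the action $(h,k) \cdot y := h\, y\, k^{-1}$ of $H \times K$ on $G$, and observe that the orbit of $x$ under this action is exactly the double coset $HxK$. By hypothesis, $g$ lies in this orbit, so orbit-stabilizer applies and guarantees that the fiber $\{(h,k) \in H \times K : (h,k)\cdot x = g\}$ is a coset of $\mathrm{Stab}(x)$ in $H \times K$, and therefore has cardinality exactly $|\mathrm{Stab}(x)|$.

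Next I would compute the stabilizer explicitly. A pair $(h,k)$ fixes $x$ iff $h x k^{-1} = x$, equivalently $h = x k x^{-1}$. For a given $k \in K$ this uniquely determines $h$, and the constraint $h \in H$ becomes $x k x^{-1} \in H$, i.e.\ $k \in x^{-1} H x$. Thus the projection $(h,k) \mapsto k$ is a bijection between $\mathrm{Stab}(x)$ and $x^{-1} H x \cap K$, giving $|\mathrm{Stab}(x)| = |x^{-1} H x \cap K|$.

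Finally I would reconcile the convention: the theorem counts decompositions $g = h x k$, whereas my action uses $k^{-1}$. Since $K$ is closed under inversion, the reparameterization $k \leftrightarrow k^{-1}$ is a bijection on $K$, so counting pairs with $h x k^{-1} = g$ is the same as counting pairs with $h x k = g$. Combining the three steps yields the claimed count $|x^{-1} H x \cap K|$.

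There is no real obstacle here; the only subtle point is choosing the correct sign convention for the $K$-action and noting that inversion is a bijection on $K$, which is handled in the last paragraph. The main conceptual move is simply recognizing double cosets as orbits of a two-sided group action so that the standard orbit-stabilizer principle supplies the count directly, rather than attempting a bare-hands bijection between decompositions and the intersection group.
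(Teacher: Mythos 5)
Your proof is correct. Note that the paper does not actually supply its own argument for this statement---it is cited verbatim as Theorem 4.4 of Wildon's lecture notes, and the paper only proves the downstream corollary (Lemma~\ref{lemma:symmetrization}) that random symmetrization gives a uniform element of the double coset. So there is no in-paper proof to compare against; your task was effectively to reconstruct the cited argument.

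Your reconstruction is sound and is the standard route. The action $(h,k)\cdot y = hyk^{-1}$ of $H\times K$ on $G$ is verified to be a genuine left action, the orbit of $x$ is exactly $HxK$ because $K$ is closed under inversion, and the fiber over $g$ is a coset of the stabilizer of $x$, hence has size $|\mathrm{Stab}(x)|$. Your identification of $\mathrm{Stab}(x)$ with $x^{-1}Hx\cap K$ via the projection $(h,k)\mapsto k$ is correct: $hxk^{-1}=x$ forces $h=xkx^{-1}$, which pins $h$ once $k$ is chosen, and membership $h\in H$ is equivalent to $k\in x^{-1}Hx$. The final reparameterization $k\leftrightarrow k^{-1}$ to match the convention $g=hxk$ is exactly the small bookkeeping step one needs, and you handle it correctly. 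An alternative, essentially equivalent phrasing is to count directly: given any one solution $(h_0,k_0)$, every other solution has the form $(h_0 u, vk_0)$ with $uxv=x$, i.e.\ $v=x^{-1}u^{-1}x$ with $u\in H$ and $v\in K$, which again bijects with $x^{-1}Hx\cap K$; but your orbit-stabilizer framing is cleaner and makes the coset structure of the fiber explicit.
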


As a corollary, it follows that selecting random symmetrizing elements from $H$ and $K$ suffice to give a random element of the double coset.

\begin{lemma}
    For any subgroups $H, K$ of a (finite) group $G$ with $x \in G$, let $h \sim H$ and $k \sim K$ be uniform random. Then $g=hxk$ is uniform random over the double coset $HxK$. \label{lemma:symmetrization}
\end{lemma}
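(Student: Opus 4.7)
The plan is to derive the lemma as a direct consequence of Theorem \ref{lemma:cardinalityDoubleCosets}. First I would observe that, by construction, the element $g = hxk$ always lies in $HxK$, so the distribution of $g$ is supported on the double coset. It then suffices to show that the induced distribution assigns equal probability mass to every element of $HxK$.

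Next I would fix an arbitrary $g \in HxK$ and write
\[
\Pr_{h \sim H,\, k \sim K}[hxk = g] \;=\; \frac{\bigl|\{(h,k) \in H \times K : hxk = g\}\bigr|}{|H|\cdot|K|}.
\]
By Theorem \ref{lemma:cardinalityDoubleCosets}, the numerator equals $|x^{-1}Hx \cap K|$, which depends only on $x$, $H$, and $K$, and crucially not on the particular representative $g$. Hence the probability of each $g \in HxK$ is the same constant, namely $|x^{-1}Hx \cap K|/(|H|\cdot|K|)$, which is the definition of the uniform distribution on $HxK$.

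The only potential subtlety is making sure that Theorem \ref{lemma:cardinalityDoubleCosets} is applied to the right double coset: one must observe that since $g \in HxK$, the pair $(x,g)$ lies in a common $(H,K)$ double coset, which is precisely the hypothesis of that theorem. Once this is noted the argument is essentially a one-line counting statement, so I do not expect any real obstacle. As a sanity check, one can verify that summing $|x^{-1}Hx \cap K|/(|H|\cdot|K|)$ over all $g \in HxK$ gives $1$, which recovers the well-known formula $|HxK| = |H|\cdot|K|/|x^{-1}Hx \cap K|$ for the size of a double coset.
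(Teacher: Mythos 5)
Your proof is correct and takes essentially the same approach as the paper: both fix an arbitrary $g \in HxK$ and invoke Theorem~\ref{lemma:cardinalityDoubleCosets} to observe that the number of pairs $(h,k)$ with $hxk=g$ is $|x^{-1}Hx \cap K|$, independent of $g$, hence the induced distribution on $HxK$ is uniform. You spell out the probability calculation a bit more explicitly than the paper does, but there is no substantive difference.
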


\begin{proof}
    Fix some $g \in HxK$ from the double coset of $x$. The number of ways to choose an $h \in H$ and a $k \in K$ such that $hxk=g$ is independent of $g$; hence for any $g$, the probability of obtaining $g=hxk$ is the same.
\end{proof}

\subsection{Symmetrization lemma}

In this section, we construct a symmetrized permutation $\varphi$ from a random function $f$ such that the sponge hash of $\varphi$ exactly matches the function $f$ (at least when $r \leq c$), and $\varphi$ is exponentially close to uniform random. Furthermore, a query to $\varphi$ or $\varphi^{-1}$ can be implemented with a single query to $f$. At a high level, we pick symmetrizing permutations such that the double cosets consist exactly of permutations with the same sponge hash. We use the notation $n=r+c$, and $\lambda = \min(r, c)$. We assume $r \leq c$ which implies $r=\lambda$. 

\begin{lemma}[Symmetrization of the one-round sponge]
    Let $f : \bit^r \rightarrow \bit^r$ be a random function. Let $\algo C \subset \algo S_{2^n}$ be a subset of permutations on $n$-bit strings where $n=r+c$ and $r \leq c$, such that $\varphi \in \algo C$ if and only if $\Sp^{\varphi}=f$. Then there exists a (quantum or classical) algorithm that samples a random $\varphi \sim \algo C$, and can implement queries to $\varphi$ and $\varphi^{-1}$ each with a single query to $f$. Further, $\varphi$ cannot be distinguished from a random permutation with advantage greater than $O(2^{-r/2})$.
    \label{lem:sponge-sym}
\end{lemma}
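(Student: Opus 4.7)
The plan is to apply Lemma~\ref{lemma:symmetrization} with Young subgroups $H, K \leq S_{2^n}$ engineered so that the $(H,K)$-double cosets coincide with the level sets of the sponge-hash map $\varphi \mapsto \Sp^{\varphi}$. For each $y \in \bit^r$, let $U_y := \{y||z : z \in \bit^c\}$ denote the corresponding ``output column'' in $\bit^n$, and set $H := \prod_{y \in \bit^r} S_{U_y}$. On the domain side, let $D := \{x||0^c : x \in \bit^r\}$ be the set of sponge inputs, and set $K := S_{\bit^n \setminus D}$, which freely permutes non-sponge inputs while fixing every $x||0^c$ as a singleton block. Plugging these into Theorem~\ref{theorem:characterizationYoungDoubleCosets}, two permutations $\pi, \pi'$ lie in the same double coset iff the counts $|U_y \cap \pi\{x||0^c\}|$ match for all $y, x$ (the remaining count for the block $\bit^n \setminus D$ is then automatic). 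Since each singleton yields a Boolean indicator, this collapses to: $\pi(x||0^c)$ and $\pi'(x||0^c)$ share the same $r$-bit prefix for every $x \in \bit^r$, i.e., $\Sp^\pi = \Sp^{\pi'}$. Hence $H \pi K = \algo C$ whenever $\pi \in \algo C$.

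Next I would produce a query-efficient base permutation $\pi \in \algo C$. Using $r \leq c$, set $\pi(x||0^c) := f(x)\,||\,x\,||\,0^{c-r}$ on $D$; this is injective because $x$ can be read off the tail, and its image is $T := \{f(x)||x||0^{c-r} : x \in \bit^r\}$. Extend $\pi$ to a full permutation by any explicit bijection (independent of $f$) between $\bit^n \setminus D$ and $\bit^n \setminus T$. A single query to $f$ then suffices to evaluate either direction: in the forward direction we query $f(x)$ only when the input is $x||0^c$, and in the reverse direction a candidate preimage of $y||z$ is encoded in the $r$-bit middle block of $z$ and can be validated or rejected by a single call to $f$.

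Given $\pi$, the sampling procedure draws $h \sim H$ and $k \sim K$ uniformly---maintained via lazy sampling so that the exponentially-large permutation tables are never materialized, which is permitted since the lemma imposes no bound on computation---and outputs $\varphi := h \pi k$. By Lemma~\ref{lemma:symmetrization}, $\varphi$ is uniform on the double coset $H \pi K = \algo C$, and each query to $\varphi$ or $\varphi^{-1}$ decomposes into one query to $\pi$ (hence at most one query to $f$) plus oracle-free lookups in the lazy tables for $h$ and $k$.

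The remaining ingredient is the distance bound. Marginalising over uniform $f$, the output law places mass $(2^{r \cdot 2^r} |\algo C_{\Sp^{\varphi}}|)^{-1}$ on each permutation $\varphi$; regrouping $\tfrac{1}{2}\sum_\varphi |P(\varphi) - 1/(2^n)!|$ by the value of $\Sp^{\varphi}$ shows that it equals exactly the total variation distance between the sponge hash of a uniformly random permutation and a uniformly random function from $\bit^r$ to $\bit^r$. I expect bounding this latter distance by $O(2^{-r/2})$ for $r \leq c$ to be the principal technical step: it reduces to a combinatorial comparison of the preimage profiles $(N_y)_{y \in \bit^r}$ under the two distributions, which can be established by standard concentration/Stirling estimates and is also recorded in~\cite{carolan2024oneway}. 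Once the TVD estimate is in hand, the remaining pieces assemble into the statement immediately.
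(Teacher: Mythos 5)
Your high-level plan matches the paper's proof almost exactly: same Young subgroups (your $H = \prod_y S_{U_y}$ and $K = S_{\bit^n \setminus D}$ are literally the paper's $H$ and $K$ up to relabeling), the same application of Theorem~\ref{theorem:characterizationYoungDoubleCosets} to identify the double cosets with level sets of $\Sp$, and the same use of Lemma~\ref{lemma:symmetrization} plus a TVD estimate for the final bound. The paper routes the TVD bound through Lemma~\ref{lem:sponge-ttable-random}, which itself reduces to the Gilboa--Gueron truncated-permutation indistinguishability bound; you instead defer to~\cite{carolan2024oneway} and Stirling-type estimates, which is a legitimate alternative route to the same $O(2^{-r/2})$.

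The one genuine gap is in your construction of the base permutation $\pi$. You define $\pi$ on $D$ by $\pi(x\|0^c) = f(x)\|x\|0^{c-r}$, with image $T = \{f(x)\|x\|0^{c-r} : x \in \bit^r\}$, and then claim to ``extend $\pi$ to a full permutation by any explicit bijection (\emph{independent of $f$}) between $\bit^n \setminus D$ and $\bit^n \setminus T$.'' This cannot exist: $D$ is $f$-independent but $T$ visibly is not, so a fixed $f$-independent map $h : \bit^n \setminus D \to \bit^n$ would have an $f$-independent image and could not equal $\bit^n \setminus T$ for all $f$. The fix is to abandon the piecewise definition and instead extend your formula globally: writing an input as $x\|b\|c$ with $x, b \in \bit^r$ and $c \in \bit^{c-r}$, set $\pi(x\|b\|c) := (b \oplus f(x)) \| x \| c$. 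This restricts to your map on $D$, is a bijection (the inverse reads $x$ off the middle block and XORs it back), evaluates in either direction with a single $f$-query, and has $\Sp^\pi = f$. This is precisely the role played by the paper's $\pi_f(x\|g\|y) = (y \oplus f(x))\|g\|x$, which differs from the above only by a fixed, $f$-independent reordering of the output bits. With this repair the rest of your argument goes through unchanged.
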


\begin{proof}
Define the transversal permutation $\pi_f$ in terms of the random function $f: \bit^{r} \rightarrow \bit^r$. Then, for $x \in \bit^r$, $y \in \bit^r$, and $g\in \bit^{n-2r}$ (observing that $g$ may be the empty string)
\begin{align}
    \pi_f(x \Vert g \Vert y) =& y \oplus f(x) \Vert g \Vert  x.
\end{align}
This construction is such that $\Sp^{\pi_f} = f$, e.g. the functions have the same truth table. Note also that both $\pi_f$ and $\pi_f^{-1}$ can be implemented using a single query to $f$. To construct the right symmetrizing subgroup, define the singleton sets $B_z$ for all $z \in \bit^r$ as \begin{align}
    B_z := \{z \Vert 0^c\},
\end{align}
and define $B_\bot$ as the complement of all $B_z$, e.g. \begin{align}
    B_\bot := \bit^n \setminus \cup_{z \in \bit^r} B_z.
\end{align}
We now observe that the sets $\{B_z\}_{z \in \bit^{r} \cup \{\bot\}}$ define a partition of $\bit^n$. Thus, we can define the corresponding right symmetrizing Young subgroup $K \leq S_{2^{r+c}}$ as \begin{align}
    K :=& \big\{\sigma \in S_{2^{r+c}} \, : \, \sigma(B_z) = B_z, \,\forall z \in \bit^{r} \cup \{\bot\}\big\}.
\end{align}

For the left symmetrizing subgroup, define the sets $A_x \subset \bit^n$ as \begin{align}
    A_x :=& \{x \Vert y \, : \, y \in \bit^{c}\},
\end{align}
which are easily seen to partition the set $\bit^n$. The $\{A_x\}_{x \in \bit^r}$ sets therefore define the left symmetrizing Young subgroup $H \leq S_{2^{r+c}}$ with \begin{align}
    H :=& \big\{\omega \, : \, \omega(A_x) = A_x, \forall x \in \bit^{r}\big\}.
\end{align}
Suppose we now sample $\omega \sim H$ and $\sigma \sim K$ uniformly at random and symmetrize $\pi_f$ to create a new permutation $\varphi$. In particular, we let
\begin{align}
    \omega \sim& H, && \sigma \sim K, &&
    \varphi := \omega \circ \pi_f \circ \sigma.
\end{align}

\begin{remark}
    If we let $G := S_{2^{n}}$, then the double cosets $H \diagdown G \diagup K$ are exactly sets of permutations which have the same sponge hash. In particular, $\pi, \pi'$ are in the same double coset if and only if $\Sp^\pi = \Sp^{\pi'}$
    \label{rem:sponge-hash-double-cosets}
\end{remark}

To see the above, recall the generic characterization that two permutations are in the same young double cosets if they distribute the elements of each $B_i$ among the $A_j$ in exactly the same way, Theorem \ref{theorem:characterizationYoungDoubleCosets}. There are two directions to show. \begin{itemize}
    \item[$(\rightarrow)$] Suppose $\pi, \pi'$ define the same sponge hash function. Then for any $z \in \bit^r$, we have $\pi(z \Vert 0^c)[:r] = \pi'(z \Vert 0^c)[:r]$, and hence if $\pi(B_z) \subset A_i$ then $\pi'(B_z) \subset A_i$. The $B_z$ are singleton sets, so this gives the equation \begin{align}
        |\pi(B_z) \cap A_x| = |\pi'(B_z) \cap A_x|, \text{ for all $x, z \in \bit^r$.}
    \end{align} The distribution of $B_z$ determine the distribution of $B_\bot$ as \begin{align}
        |\pi(B_\bot) \cap A_x| = |A_x| - \sum_{z \in \bit^r} |\pi(B_z) \cap A_x|
    \end{align}
    and similarly for $\pi'$. Therefore, the equality holds for all $x, z$ including $z=\bot$.
    \item[$(\leftarrow)$] Similar to the above argument, but in reverse; the implications go both directions.
\end{itemize}

We have from Remark \ref{rem:sponge-hash-double-cosets} above and the symmetrization lemma (Lemma \ref{lemma:symmetrization}) that $\varphi$ is uniform random over all permutations having the same sponge hash function as $\pi$. From Lemma \ref{lem:sponge-ttable-random} in Appendix \ref{app:tech-lemmas}, we have that the sponge hash of $\pi$ can be distinguished from the distribution induced by the sponge hash of a truly random permutation with advantage $O(2^{-r/2})$, even given the full truth table of the sponge. It follows from our symmetrization argument that $\varphi$ can be distinguished from a uniform random permutation with advantage $O(2^{-r/2})$, even given the whole truth table.
\end{proof}

\section{Indifferentiability (with Pre-Computation) of the One-Round Sponge}
\label{sec:sponge-indiff}

In this section we show that the one-round sponge is $\epsilon$-perfect\footnote{This means that query unbounded adversaries have advantage at most $\epsilon$} strongly reset indifferentiable with shared randomness, for exponentially small $\epsilon$, both quantumly and classically. This strengthens a result of \cite{Zhandry21}, which proved statistical instead of perfect indifferentiability, i.e. security only against query bounded adversaries. As a corollary, we then have that the one-round sponge is strongly indifferentiable with pre-computation and shared randomness from Lemma \ref{lem:perfect-reset-implies-precomp}. We can then remove the shared randomness from this definition at the cost of switching to weak indifferentiability with computationally unbounded simulator through Lemma \ref{lem:remove-sr}. With this result in place, we illustrate how our composition theorem implies a tight space-time tradeoff for the one-round sponge inversion. More broadly, our composition theorem allows the one-round sponge to inherit any space-time tradeoff of a random oracle.

\subsection{Proof of Indifferentiability}

\begin{theorem} Let $r,c \in \N$ with $r \leq c$ and $\lambda = \min(r,c)$ be parameters. Let $\varphi:\bit^{r+c} \rightarrow \bit^{r+c}$ be a random permutation. Then, the (one-round) sponge construction $\mathsf{C}$ with $$
\mathsf{C}_\lambda^{\mathrm{priv}} = \Sp^\varphi \quad\quad \text{and} \quad\quad \mathsf{C}_\lambda^{\mathrm{pub}} = (\varphi,\varphi^{-1})$$ 
is strong, $\langle\text{ classical  }\vert\text{ quantum 
 }\rangle$, is $\epsilon$-perfect reset indifferentiable (shared randomness) from a random oracle $\pubpriv R$, where for a random $f: \bit^{r} \rightarrow \bit^r$, the interfaces $\priv R = \pub R$ correspond to an oracle for $f$, and where $\epsilon = O(2^{-r/2})$.
\label{thm:sponge-indiff}
\end{theorem}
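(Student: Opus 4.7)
The plan is to construct an explicit stateless simulator with shared randomness, then argue that the joint distributions of the distinguisher's interfaces in the real and ideal worlds have total variation distance at most $O(2^{-r/2})$, which gives the bound against arbitrary (query-unbounded) distinguishers.

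First, I will define the simulator $\algo S[\pub R]$. The shared randomness $\mathsf{SR}$ encodes a pair of permutations $\omega \in H$ and $\sigma \in K$, where $H$ and $K$ are the Young subgroups from the proof of Lemma \ref{lem:sponge-sym} (recall $\pub R = f$ is the random oracle). On an input $x \in \bit^{n}$, the simulator makes a single query to $f$ in order to evaluate $\pi_f$ on $\sigma(x)$ and returns $\omega(\pi_f(\sigma(x)))$. Inverse queries are handled analogously using $\pi_f^{-1}$, which also costs one $f$-query. Crucially, each query is answered from $\mathsf{SR}$ and $f$ alone with no persistent internal state, so the simulator is stateless and hence reset-indifferentiable. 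The simulated pair $(\algo S[\pub R], \algo S[\pub R]^{-1})$ thus realizes the permutation $\varphi = \omega \circ \pi_f \circ \sigma$ and its inverse.

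Next, I compare the two worlds. In the real world, the distinguisher has access to $(\Sp^{\varphi}, \varphi, \varphi^{-1})$ where $\varphi$ is a uniformly random permutation on $\bit^{n}$. In the ideal world, the distinguisher has access to $(f, \varphi', \varphi'^{-1})$ where $f$ is a uniformly random function and $\varphi' = \omega \circ \pi_f \circ \sigma$ is produced by the simulator. By the construction of $\pi_f$ in Lemma \ref{lem:sponge-sym} and Remark \ref{rem:sponge-hash-double-cosets}, we have $\Sp^{\varphi'} = f$ identically, so in both worlds the private interface is exactly the sponge hash of the public permutation. Therefore the only difference between the two worlds lies in the marginal distribution of the public permutation: in the real world it is uniform over $S_{2^n}$, while in the ideal world it is drawn by first picking $f$ uniformly and then symmetrizing to get a uniform element of the double coset $\{\pi : \Sp^{\pi} = f\}$.

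Finally, I invoke Lemma \ref{lem:sponge-sym}, which states that $\varphi'$ is indistinguishable from a uniformly random permutation with advantage at most $O(2^{-r/2})$ even given the entire truth table. Since this is a bound on the statistical distance between the distributions of $\varphi'$ and a uniform permutation, and the private interface in each world is a deterministic function of the public permutation, the total variation distance between the full real-world and ideal-world tuples is also $O(2^{-r/2})$. Consequently, no distinguisher---however many queries it makes, and whether it is classical or quantum---can achieve advantage greater than $\epsilon = O(2^{-r/2})$. This is exactly the definition of $\epsilon$-perfect strong reset indifferentiability with shared randomness. The main technical content is really already inside Lemma \ref{lem:sponge-sym}; the only subtle step here is verifying that the simulator can be written as a stateless function of $(f, \mathsf{SR})$ so that the reset variant of the notion is obtained, and verifying that consistency between the private and public interfaces is preserved exactly (not merely statistically) in the ideal world, which follows because $\Sp^{\omega \circ \pi_f \circ \sigma}$ does not depend on the choice of $\omega \in H$ or $\sigma \in K$.
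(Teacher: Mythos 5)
Your proposal is correct and follows essentially the same approach as the paper: both construct a stateless simulator whose shared randomness encodes the symmetrizing permutations $\omega \in H$, $\sigma \in K$, observe that $\Sp^{\omega \circ \pi_f \circ \sigma} = f$ exactly so the private interface is reproduced perfectly, and reduce the distinguishing advantage to the total variation distance from Lemma \ref{lem:sponge-sym}. The only presentational difference is that the paper factors the argument into three explicit hybrid games (Game 1 $\approx$ Game 2 by symmetrization, Game 2 $=$ Game 3 by the simulator's fidelity to Algorithm \ref{alg:sym-r-c-f}), whereas you collapse these into a single comparison of the two joint distributions, noting that both interfaces are deterministic functions of the public permutation; the technical content is identical.
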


\begin{proof}
We will prove the result for quantum distinguisher, though the proof proceeds similarly for classical. Let $\algo D$ be an unbounded quantum algorithm which makes queries to $\pubpriv C$. Consider the following sequence of hybrid experiments:

\begin{description}
    \item $\mathbf{Game \, 1:}$ This hybrid corresponds to the \emph{real world}. The adversary $\algo D$ 
    receives access to the (single-round) sponge construction $\mathsf{C}$ with 
    $$
\mathsf{C}_\lambda^{\mathrm{priv}} = \Sp^\varphi \quad \text{ and } \quad \mathsf{C}_\lambda^{\mathrm{pub}} = (\varphi,\varphi^{-1})$$
where $\varphi:\bit^{r+c} \rightarrow \bit^{r+c}$ is a random permutation and $r\leq c$. Define the event
$$
\mathbf{Game \, 1} := \left[ b=1 \, : \, b \leftarrow \algo D[\Sp^\varphi,(\varphi,\varphi^{-1})] \right].
$$

\item $\mathbf{Game \, 2:}$ This corresponds to the following intermediate experiment. The adversary $\algo D$ receives access to the (single-round) sponge construction $\mathsf{C}$ with 
    $$
\mathsf{C}_\lambda^{\mathrm{priv}} = \Sp^{\hat\varphi} \quad \text{ and } \quad \mathsf{C}_\lambda^{\mathrm{pub}} = (\hat\varphi,\hat\varphi^{-1})$$
where $\hat\varphi:\bit^{r+c} \rightarrow \bit^{r+c}$ is a permutation which is generated as follows:
\begin{enumerate}
    \item Sample a uniformly random function $f: \bit^r \rightarrow \bit^r$.

    \item Run $\hat{\varphi} \leftarrow \mathsf{Sym}_{r,c}(f)$ using the symmetrization procedure in Algorithm \ref{alg:sym-r-c-f}.
\end{enumerate}

We define the corresponding hybrid event by
$$
\mathbf{Game \, 2} := \left[ b=1 \, : \, b \leftarrow \algo D[\Sp^{\hat\varphi},(\hat\varphi,\hat\varphi^{-1})] \right].
$$

\item $\mathbf{Game \, 3:}$ This hybrid corresponds to the \emph{ideal world}. The adversary $\algo D$ receives access to interfaces which are simulated by the following stateless simulator $\algo S$
 which has access to a random oracle $f: \bit^{r} \rightarrow \bit^r$ and a common source of shared randomness $\mathsf{SR} \in \bit^*$. This is the simulator for the interface $(\hat\varphi,\hat\varphi^{-1})$: it answers queries using the permutation $\hat\varphi \leftarrow \mathsf{Sim}_{r,c}^{f}(\cdot \,;\mathsf{SR})$ which can be evaluated via oracle calls to the random oracle $f$ and shared randomness $\mathsf{SR} \in \bit^*$, where $\mathsf{Sim}_{r,c}^{f}(\cdot \,;\mathsf{SR})$ is the procedure in Algorithm \ref{alg:sim-f} which internally calls $f$.

We define the corresponding event for the ideal world by
$$
\mathbf{Game \, 3} := \Big[ b=1 \, : \, b \leftarrow \algo D[f,\algo S[f,\mathsf{SR}]] \Big].
$$
\end{description}

\begin{algorithm}[t]
\DontPrintSemicolon
\SetAlgoLined
\label{alg:sym-r-c-f}
\KwIn{Parameters $r,c \in \N$ and a truth table for a function $f: \bit^r \rightarrow \bit^r$.}
    
\KwOut{Truth table for a permutation $\hat{\varphi}:\bit^{r+c} \rightarrow \bit^{r+c}$.}

Let $\pi: \bit^{r+c} \rightarrow \bit^{r+c}$ be the permutation with
$$
\pi(x \Vert g \Vert y) := y \oplus f(x) \Vert g \Vert  x.
$$

For $z \in \bit^r$, let $B_z = \{z \Vert 0^c\}$ and $B_\bot =\bit^n \setminus \cup_{z \in \bit^r} B_z$;

For $x \in \bit^r$, let $A_x = \{(x \Vert y) \, : \, y \in \bit^{c}\}$;

Sample a random permutation $\sigma$ from the Young subgroup $K \leq S_{2^{r+c}}$ with
$$
K=\{\sigma \in S_{2^{r+c}} \, : \, \sigma(B_z) = B_z, \forall z \in \bit^{r} \cup \{\bot\}\};$$

Sample a random permutation $\omega$ from the Young subgroup $H \leq S_{2^{r+c}}$ with
$$
H = \{\omega \in S_{2^{r+c}} \, : \, \omega(A_x) = A_x, \forall x \in \bit^{\lambda}\};$$

Output a truth table for the permutation $\hat\varphi = \omega \circ \pi \circ \sigma$.

\caption{$\mathsf{Sym}_{r,c}(f)$}
\end{algorithm}

First, we show the following:
\begin{claim}
$$
\left|\Pr\left[\mathbf{Game \, 2}\right] - \Pr\left[\mathbf{Game \, 1}\right]\right| \leq O(2^{-r/2}).
$$
\end{claim}
The claim follows from sponge symmetrization, Lemma \ref{lem:sponge-sym}. In particular, it was shown that distinguishing the truth table of a uniform random $\varphi$ from the truth table of a symmetrized $\varphi$ constructed from a random $f:\bit^r \rightarrow \bit^r$ can be done with advantage at most $O(2^{-r/2})$. The permutation $\varphi$ uniquely determines $\Sp^\varphi$ in both $\mathbf{Game \, 1}$ and $\mathbf{Game \, 2}$, so this suffices to prove the claim. Finally, we observe the following:
\begin{claim}
$$
\Pr\left[\mathbf{Game \, 3}\right]= \Pr\left[\mathbf{Game \, 2}\right].
$$
\end{claim}
\begin{proof}
Note that in the previous experiment, $\mathbf{Game \, 2}$, we already introduced a perfectly random function, which is now featured as a random oracle. To prove the claim, it suffices to argue that a sufficiently long shared random strong $\mathsf{SR} \in \bit^*$ enables the of simulator $\algo S$ to run Algorithm \ref{alg:sim-f} to generate the same symmetrizing permutations $\sigma \sim K$ and $\omega \sim H$ on each query (despite being stateless). Define $N =2^{r+c}$ and recall that $H,K \leq S_{N}$ are both Young subgroups. Let $B = \{B_z\}_{{z \in \bit^{r} \cup \{\bot\}}}$ denote the invariant sets with respect to $K$, and let $A = \{A_x\}_{x \in \bit^{r}}$ denote the invariant sets for $H$. Then, 
$$
K \cong \prod_{A_i \in A} A_i \quad\quad \text{ and } \quad\quad H \cong \prod_{B_j \in B} B_j.
$$
Hence, the claim is essentially just a consequence of the \emph{coupon collector problem}\footnote{In fact, to generate a random permutation $\pi \in S_N$ only $O(N \log N)$ random bits suffice on average. The probability of failure can be further suppressed with additional amounts of randomness. Since the shared random string $\mathsf{SR} \in \bit^*$ can in principle be unbounded, we did not analyze the length explicitly.}.
\end{proof}

Putting everything together, we get that
\begin{align*}
\Big|\Pr\left[\algo D[\Sp^\varphi,(\varphi,\varphi^{-1})] =1 \right]
- \Pr\Big[\algo D\big[f,\algo S[f,\mathsf{SR}]]\big]=1\Big]\Big| \, \leq \, O(2^{-r/2}).
\end{align*}
\end{proof}

We can lift $\epsilon$-perfect reset indifferentiability to the analogous notion of indifferentiability with pre-computation, as in the following corollary.

\begin{corollary}
    Let $r,c \in \N$ with $r \leq c$ and $\lambda = \min(r,c)$ be parameters. Let $\varphi:\bit^{r+c} \rightarrow \bit^{r+c}$ be a random permutation. Then, the (one-round) sponge construction $\mathsf{C}$ with $$
    \mathsf{C}_\lambda^{\mathrm{priv}} = \Sp^\varphi \quad\quad \text{and} \quad\quad \mathsf{C}_\lambda^{\mathrm{pub}} = (\varphi,\varphi^{-1})$$ 
    is strong, $\langle\text{ classical }\vert\text{ quantum }\rangle$, and perfect $(S,T,S_{\mathsf{sim}},T_{\mathsf{sim}},\epsilon)$-indifferentiable (with pre-computation and shared randomness) from a random oracle $f: \bit^{r} \rightarrow \bit^r$ for any parameters $S$ and $T$, where $S_{\mathsf{sim}}=0$ and $T_{\mathsf{sim}}=T$, and where $\epsilon = O(2^{-r/2})$.
    \label{cor:sponge-indiff-pre}
\end{corollary}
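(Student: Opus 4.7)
The corollary is set up to be an immediate consequence of two previously established results: the $\epsilon$-perfect reset indifferentiability of the one-round sponge from Theorem \ref{thm:sponge-indiff}, and the generic lifting result from Lemma \ref{lem:perfect-reset-implies-precomp} which converts $\epsilon$-perfect reset indifferentiability into indifferentiability with pre-computation. So my plan is essentially to verify that the hypotheses of the lifting lemma are met by the sponge construction and then read off the resulting parameters.

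The first step is to invoke Theorem \ref{thm:sponge-indiff} to obtain that $\mathsf{C}$ is strong $\epsilon$-perfect reset indifferentiable (with shared randomness, in the classical or quantum setting as desired) from the random oracle $\pubpriv R$, with $\epsilon = O(2^{-r/2})$. The next step is to examine the simulator constructed in the proof of that theorem to extract the value of $T_{\mathsf{sim}}$. Inspecting Algorithm \ref{alg:sim-f} (which is what the simulator internally runs on each query), we see that each call answers a single query to $\hat\varphi$ or $\hat\varphi^{-1}$ by making a single query to the random oracle $f$ to evaluate $\pi_f$ and then applying the symmetrizing permutations $\sigma$ and $\omega$ that are derivable from the shared randomness alone. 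Thus the simulator uses $T_{\mathsf{sim}} = T$ queries to answer $T$ distinguisher queries. Moreover, this simulator is stateless, which is exactly the condition required by Lemma \ref{lem:perfect-reset-implies-precomp}.

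With these ingredients assembled, I would apply Lemma \ref{lem:perfect-reset-implies-precomp} directly. The lemma promises strong $(S, T, S_{\mathsf{sim}}, T_{\mathsf{sim}}, \epsilon)$-indifferentiability with pre-computation, for any distinguisher advice size $S$ and any number of distinguisher queries $T$, with $S_{\mathsf{sim}} = 0$ and the same $T_{\mathsf{sim}}$ and $\epsilon$ as in the reset indifferentiability statement. Plugging in $T_{\mathsf{sim}} = T$ and $\epsilon = O(2^{-r/2})$ yields exactly the parameters claimed in the corollary. The shared randomness, classical/quantum, and strong variant labels carry through the lifting lemma unchanged (the lemma is stated as applying uniformly in these variant choices).

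There is not really a hard step here: the work was done in proving Theorem \ref{thm:sponge-indiff} (the symmetrization argument) and in proving Lemma \ref{lem:perfect-reset-implies-precomp} (the generic lifting). The only mild subtlety to be careful about is confirming that the lemma's hypothesis ``the simulator makes at most $T_{\mathsf{sim}}$ queries to $\pub R$ to implement $T$ of the distinguisher's queries'' is satisfied with the constant $T_{\mathsf{sim}} = T$ claimed in the corollary; this is verified by inspection of Algorithm \ref{alg:sim-f}, which handles every query with exactly one call to $f$.
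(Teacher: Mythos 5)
Your proposal matches the paper's proof exactly: the paper's proof is a one-line invocation of Theorem \ref{thm:sponge-indiff} and Lemma \ref{lem:perfect-reset-implies-precomp}, and your fuller account of how the parameters $S_{\mathsf{sim}}=0$, $T_{\mathsf{sim}}=T$, and $\epsilon=O(2^{-r/2})$ are read off from the stateless simulator of Algorithm \ref{alg:sim-f} via the lifting lemma is precisely the intended reasoning.
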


\begin{proof}
    Follows from Theorem \ref{thm:sponge-indiff} and Lemma \ref{lem:perfect-reset-implies-precomp}.
\end{proof}

\begin{algorithm}[t]
\DontPrintSemicolon
\SetAlgoLined
\label{alg:sim-f}
\KwIn{Parameters $r,c \in \N$, an oracle for a function $f: \bit^r \rightarrow \bit^r$, an input string $x_{\mathsf{in}} \in \bit^{r+c}$, and a random string $\mathsf{SR} \in \bit^*$.}
    
\KwOut{An output string $y_{\mathsf{out}} \in \bit^{r+c}$.}

For $z \in \bit^r$, let $B_z = \{z \Vert 0^c\}$ and $B_\bot = \bit^n \setminus \cup_{z \in \bit^r} B_z$;

For $x \in \bit^r$, let $A_x = \{(x \Vert y) \, : \, y \in \bit^{c}\}$;

Use a subset of the random coins in $\mathsf{SR} \in \bit^*$ to assign a random permutation $\sigma$ from the Young subgroup $K \leq S_{2^{r+c}}$ with
$$
K = \{\sigma \in S_{2^{r+c}} \, : \, \sigma(B_z) = B_z, \forall z \in \bit^{r} \cup \{\bot\}\};$$

Use another subset of the random coins in $\mathsf{SR} \in \bit^*$ to assign a random permutation $\omega$ from the Young subgroup $H \leq S_{2^{r+c}}$ with
$$
H = \{\omega \in S_{2^{r+c}} \, : \, \omega(A_x) = A_x, \forall x \in \bit^{\lambda}\};$$

Output $y_{\mathsf{out}}=\hat\varphi(x_{\mathsf{in}})$, where $\hat\varphi$ is the symmetrized permutation $\hat\varphi = \omega \circ \pi_f \circ \sigma$ and where $\pi_f: \bit^{r+c} \rightarrow \bit^{r+c}$ can be evaluated with an oracle call to $f$ via
$$
\pi_f(x \Vert g \Vert y) := y \oplus f(x) \Vert g \Vert  x.
$$

\caption{$\mathsf{Sim}_{r,c}^{f}(x_{\mathsf{in}}\,;\mathsf{SR})$}
\end{algorithm}

This further implies weak indifferentiability with shared randomness and a statistical simulator. Using Lemma \ref{lem:remove-sr}, we can lift this to weak indifferentiability with pre-computation and without shared randomness at the cost of a single bit of loss in advice size, with the remaining variants set the same way.

\begin{corollary} Let $r,c \in \N$ and $r\leq c$, with  $\lambda = r$ be parameters. Let $\varphi:\bit^{r+c} \rightarrow \bit^{r+c}$ be a random permutation. Then, the (one-round) sponge construction $\mathsf{C}$ with $$
\mathsf{C}_\lambda^{\mathrm{priv}} = \Sp^\varphi \quad\quad \text{and} \quad\quad \mathsf{C}_\lambda^{\mathrm{pub}} = (\varphi,\varphi^{-1})$$ 
is weak, $\langle\text{ classical } \vert \text{ quantum }\rangle$ and perfect $(S,T,S_{\mathsf{sim}},T_{\mathsf{sim}},\epsilon)$-indifferentiable with pre-computation and a statistical simulator, but \emph{not} shared randomness, from a random oracle $f: \bit^{r} \rightarrow \bit^r$ for any parameters $S$ and $T$, where $S_{\mathsf{sim}}=1$, $T_{\mathsf{sim}}=T$, and $\epsilon = O(2^{-r/2})$.
\label{cor:weak-indiff-sponge}    
\end{corollary}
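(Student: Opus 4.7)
The plan is to chain Corollary \ref{cor:sponge-indiff-pre} with Lemma \ref{lem:remove-sr}. By Corollary \ref{cor:sponge-indiff-pre}, the one-round sponge is strong, $\langle\text{classical}\vert\text{quantum}\rangle$, perfect $(S,T,0,T,O(2^{-r/2}))$-indifferentiable with pre-computation and shared randomness from a random oracle $f: \bit^r \rightarrow \bit^r$, for any $S$ and $T$. First I would observe that the simulator witnessing this corollary (given by Algorithm \ref{alg:sim-f}) uses only a single query to $f$ per simulated query to $(\hat\varphi, \hat\varphi^{-1})$, though it is computationally unbounded; hence it is a statistical simulator in the sense of Section \ref{section:indiff-with-pre}.

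Next, I would invoke the fact that strong indifferentiability with pre-computation trivially implies weak indifferentiability with pre-computation (for the same choice of variants), since one may use the strong simulator $(\algo S_0, \algo S_1)$ as a weak simulator that happens not to depend on $(\algo D_0, \algo D_1)$. This upgrades the above to weak, $\langle\text{classical}\vert\text{quantum}\rangle$, perfect $(S,T,0,T,O(2^{-r/2}))$-indifferentiability with pre-computation, shared randomness, and statistical simulation. Finally, I would apply Lemma \ref{lem:remove-sr} to remove shared randomness: the lemma exactly handles the weak, statistical-simulation, shared-randomness setting, preserving the remaining variants ($\langle\text{classical}\vert\text{quantum}\rangle$ and $\langle\text{computational}\vert\text{statistical}\vert\text{perfect}\rangle$ distinguisher), at the cost of one additional bit of simulator advice. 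This lifts $S_{\mathsf{sim}}=0$ to $S_{\mathsf{sim}}=1$ while keeping $T_{\mathsf{sim}}=T$ and $\epsilon = O(2^{-r/2})$, giving precisely the statement of the corollary.

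There is no substantive obstacle: all the infrastructure was built exactly to make this chaining immediate. The only points requiring care are (i) confirming that the simulator of Algorithm \ref{alg:sim-f} qualifies as \emph{statistical}, which is clear from its one-query-per-query behaviour, and (ii) checking that the ``perfect'' distinguisher variant and the classical/quantum distinction are carried through both the strong-implies-weak passage and through Lemma \ref{lem:remove-sr}, both of which are explicitly covered by the statements of those results.
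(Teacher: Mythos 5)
Your proposal is correct and follows essentially the same route as the paper: chain Corollary~\ref{cor:sponge-indiff-pre} with Lemma~\ref{lem:remove-sr}, using the observation that Algorithm~\ref{alg:sim-f} is a statistical simulator (one query to $f$ per simulated query, computationally unbounded) and that the strong--with--shared-randomness statement implies the weak one needed as the premise of Lemma~\ref{lem:remove-sr}. The paper's stated proof is just a pointer to those two results; the surrounding prose carries exactly the intermediate steps you spell out.

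One small imprecision worth flagging: the strong-to-weak passage is not quite ``use the same simulator, which happens not to depend on $(\algo D_0, \algo D_1)$.'' In the weak definition, the offline simulator $\algo S_0$ must itself produce the advice that is handed to $\algo D_1$ (the term $\algo S_0[\mathsf{R}_\lambda]_{\algo D}$ in the probability expression), whereas in the strong definition that advice is produced by $\algo D_0$ running against the interface $\algo S_0[\mathsf{R}_\lambda]_{\algo D}$. So the weak offline simulator one constructs from a strong one is $\algo S_0'$ that first simulates the interface as $\algo S_0$ would, then internally runs $\algo D_0$ on it, and outputs $\algo D_0$'s advice together with the $\algo S$-advice. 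This $\algo S_0'$ genuinely depends on $\algo D_0$. The paper makes this point explicitly in the discussion preceding the weak definition. This does not change the conclusion or the parameter accounting (advice sizes and $T_{\mathsf{sim}}$ are unaffected, since $\algo S_0$ was already computationally unbounded and the advice to $\algo D_1$ is still $S$ (qu)bits), so the rest of your argument goes through unchanged.
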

\begin{proof}
    Follows from Corollary \ref{cor:sponge-indiff-pre} and Lemma \ref{lem:remove-sr}.
\end{proof}

\subsection{Space-Time Trade-Offs for Sponge Inversion}\label{sec:ST-trade-off}

In this section we illustrate how a (tight) quantum space-time trade-off for single-round sponge inversion is a special case of our composition theorem. In other words, we consider $\Sp^\varphi: \bit^{r} \rightarrow \bit^r$, where $\varphi: \bit^{r+c} \rightarrow \bit^{r+c}$ is a random permutation, in the case of non-uniform quantum adversaries that make $T$ quantum queries to $\varphi, \varphi^{-1}$ and take $S$ qubits of quantum advice (which may depend arbitrarily on $\varphi$). We will here assume that $r \leq c$. Chung et al.~\cite{chung2020tight} show that any quantum algorithm which finds a pre-image of a randomly generated image $f(x)$ with probability $\epsilon$ using $S$ qubits of advice, where $f$ is a random function $f: [N] \rightarrow [M]$, satisfies a space-time trade-off
$$
\epsilon \leq \tilde{O} \left(\sqrt[3]{\frac{S\cdot T + T^2}{\min(N,M)}}\right).
$$
If $S$ is classical advice, then~\cite{chung2020tight} manage to get a slightly better bound, namely
$$
\epsilon \leq \tilde{O} \left(\frac{S\cdot T + T^2}{\min(N,M)}\right).
$$

From the characterization in Corollary \ref{cor:weak-indiff-sponge}, we can simply apply Theorem \ref{thm:composition}, our composition theorem, to obtain a time-space tradeoff for the one-round sponge by lifting a result which is known for random oracles. Note that this applies for both classical and quantum space-time tradeoffs, and both classical and quantum advice. We will illustrate this procedure for sponge inversion, though it applies more generally.

\paragraph{Sponge inversion.}

We prove the following space-time trade-off relations for one-round sponge inversion.

\begin{theorem}[Space-time trade-off for sponge inversion]
Let $r,c \in \N$ be integers such that $r \leq c$.
Any (classical or quantum) inverter $\algo A = (\algo A_0,\algo A_1)$ for the one-round sponge construction $\Sp^\varphi: \bit^{r} \rightarrow \bit^r$ which consists of a pair of algorithms, where
\begin{itemize}
    \item $\algo A_0$ prepares $S$ (qu)bits of advice $\alpha$ (depending arbitrarily on $\varphi$), 
    
    \item $\algo A_1$ receives $\alpha$ and makes $T$ (classical/quantum) queries to either $\varphi$ or $\varphi^{-1}$,
\end{itemize}
 and where $\algo A$ succeeds with non-trivial\footnote{We remark that this requirement in our theorem statement can be relaxed at the cost of including an additional additive term of $O(2^{-r/2})$ in each of the space-time trade-offs.} probability $\epsilon = \omega(2^{-r/2})$ for a random permutation $\varphi: \bit^{r+c} \rightarrow \bit^{r+c}$, must obey the space-time trade-offs:
\begin{itemize}
    \item \emph{(Classical advice and queries:)}
$$
\epsilon \leq O\left(\frac{ST}{2^r}\right)
$$
    \item \emph{(Classical advice and quantum queries:)}
$$
\epsilon \leq \tilde{O} \left(\frac{S T + T^2}{2^{r}}\right).
$$
\item \emph{(Quantum advice and queries:)}
$$
\epsilon \leq \tilde{O} \left(\sqrt[3]{\frac{S T + T^2}{2^{r}}}\right).
$$
\end{itemize}
\end{theorem}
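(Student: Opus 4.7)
The strategy is to cast sponge inversion as a security game in the composition framework of Section \ref{sec:composition} and then invoke Theorem \ref{thm:composition} together with Corollary \ref{cor:weak-indiff-sponge} to reduce to random-oracle inversion, where the space-time trade-offs of Chung, Guo, Liu, and Qian \cite{chung2020tight} (in the quantum settings) and of Yao \cite{yao1990coherent} and De, Trevisan, Tulsiani \cite{de2010timespace} (in the classical setting) already apply.

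First I would specify the cryptosystem $\algo P$ and environment $\algo E$ that implement function inversion. The cryptosystem $\algo P$ samples $x \rand \bit^r$ uniformly at random, queries its private interface once to obtain $y$, and sends $y$ as a challenge to the online adversary. The environment $\algo E$ receives the adversary's output $x'$, queries the private interface one more time to check whether it maps $x'$ to $y$, and outputs the resulting bit. With this setup, the probability that $\algo E$ outputs $1$ in the $\pubpriv C$ model with adversary $\algo A = (\algo A_0, \algo A_1)$ is exactly the one-round sponge inversion success probability of $\algo A$, and in the $\pubpriv R$ model it is exactly the random oracle inversion probability. The inverter's $T$ queries (to $\varphi$ or $\varphi^{-1}$) enter as $\algo A_1$'s online query count $T_1$, while $\algo P$ contributes $T_2 = O(1)$ verification queries to $\priv C$, so $T_1 + T_2 = T + O(1)$.

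Next I would invoke the composition theorem. By Corollary \ref{cor:weak-indiff-sponge}, the one-round sponge $\pubpriv C$ is weakly $(S, T+O(1), 1, T+O(1), O(2^{-r/2}))$-indifferentiable with pre-computation (without shared randomness) from a random oracle $\pubpriv R$ for a random $f: \bit^r \to \bit^r$. Theorem \ref{thm:composition} then produces an attacker $\algo A' = (\algo A_0', \algo A_1')$ in the $\pubpriv R$ model using $S+1$ (qu)bits of advice and $T + O(1)$ online queries, whose inversion success probability differs from $\epsilon$ by at most $O(2^{-r/2})$. Note that the $+1$ bit of simulator advice is asymptotically absorbed into $S$, and the $T_{\mathsf{sim}} = T+O(1)$ query count contributes only constants to the eventual bound.

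Finally, I would plug $\algo A'$ into the known random-oracle inversion bounds with $N = M = 2^r$: the classical-advice, classical-query bound of \cite{yao1990coherent,de2010timespace} yields $\epsilon - O(2^{-r/2}) \leq O(ST/2^r)$; the classical-advice, quantum-query bound of \cite{chung2020tight} yields $\epsilon - O(2^{-r/2}) \leq \tilde{O}((ST + T^2)/2^r)$; and the quantum-advice, quantum-query bound of \cite{chung2020tight} yields $\epsilon - O(2^{-r/2}) \leq \tilde{O}(\sqrt[3]{(ST + T^2)/2^r})$. The non-triviality assumption $\epsilon = \omega(2^{-r/2})$ ensures that the indifferentiability loss is dominated by the main term (and, as noted in the footnote, dropping this assumption simply adds an $O(2^{-r/2})$ term), completing the argument.

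The main obstacle is conceptual rather than technical, and was in fact already dispatched in earlier sections: it is the construction of a pre-computation-aware indifferentiability notion and a simulator that matches the sponge structure while only incurring a single bit of advice overhead. Given Corollary \ref{cor:weak-indiff-sponge} and Theorem \ref{thm:composition}, the remainder is essentially bookkeeping, modulo care in verifying that $\algo E$'s verification query and $\algo A_1$'s inverse-permutation queries are all syntactically captured within the composition framework's online query budget.
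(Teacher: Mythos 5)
Your proposal follows essentially the same route as the paper: cast sponge inversion as a security game in the composition framework, invoke Theorem \ref{thm:composition} via Corollary \ref{cor:weak-indiff-sponge}, and plug into the known random-oracle inversion bounds. One small slip worth flagging: you have the environment $\algo E$ make the verification query, but in the framework $\algo E$ has no oracle access (Definition \ref{defn:environment}); only the cryptosystem $\algo P$ may query the private interface. The paper avoids this by having $\algo P$ issue both the challenge-generation and verification queries ($T_2 = 2$) and then forward the resulting bit to $\algo E$, which merely outputs it. Your own closing caveat about syntactic bookkeeping anticipates exactly this issue, and once $\algo E$'s check is moved into $\algo P$ the argument goes through as you describe.
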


\begin{proof}
    Note that these bounds
    are all known to hold for inverting a random function $f : \bit^r \rightarrow \bit^r$. The first is due to Yao \cite{yao1990coherent} and De et al. \cite{de2010timespace}, and the second and third due to Chung et al \cite{chung2020tight}. We will show how the function inversion game can be modelled as a security game with pre-computation, as defined in Section \ref{sec:composition}. 
    
    Let us consider quantum queries and quantum advice, though the proof proceeds similarly for all three cases.
    
\begin{description}
    \item $\mathbf{Function \,\, inversion.}$ Parties will receive oracle access to an idealized random function $\mathsf{R}$ with 
    $$
    \priv R = f \quad \text{ and } \quad \pub R = f$$
    where $f:\bit^{r} \rightarrow \bit^{r}$ is a random function. The adversary is denoted $\algo A=(\algo A_0,\algo A_1)$. The offline adversary $\algo A_0$ receives unbounded access to $\pubpriv R$, and the online adversary receives access to $\pub R$. The cryptosystem $\algo P$ has access to the function through $\priv R$, and the environment $\algo E$ interacts with $\algo P$ (it will not need to interact with $\algo A$ in this game). The game proceeds as follows. \begin{enumerate}
        \item Offline adversary $\algo A_0$ receives unbounded access to $\pubpriv R$, which it uses to prepare an $S$ qubit state $\ket{\alpha_{\algo A}}$, which is forwarded to $\algo A_1$.
        \item Cryptosystem $\algo P$ samples a random $x \sim \bit^r$, and computes $y=f(x)$ using private interface $\priv R$. It forwards $y$ to $\algo A_1$.
        \item Online adversary $\algo A_1$ returns some $x' \in \bit^r$ to $\algo P$. The cryptosystem $\algo P$ then checks whether $f(x')=y$, puts the result in a bit $b \in  \bit$.
        \item Cryptosystem $\algo P$ forwards $b$ to the environment $\algo E$, which then outputs $b$.
    \end{enumerate}
    \item $\mathbf{Sponge \,\, inversion.}$ Parties will receive oracle access to the (single-round) sponge construction $\mathsf{C}$ with 
    $$
    \priv C = \Sp^\varphi \quad \text{ and } \quad \pub C = (\varphi,\varphi^{-1})$$
    where $\varphi:\bit^{r+c} \rightarrow \bit^{r+c}$ is a random permutation. The adversary is denoted $\algo A=(\algo A_0,\algo A_1)$. The offline adversary $\algo A_0$ receives unbounded access to $\pubpriv C$, and the online adversary receives access to $\pub C$. The cryptosystem $\algo P$ has access to the sponge hash $\priv C$, and the environment $\algo E$ interacts with $\algo P$ (it will not need to interact with $\algo A$ in this game). The game proceeds as follows. \begin{enumerate}
        \item Offline adversary $\algo A_0$ receives unbounded access to $\pubpriv C$, which it uses to prepare an $S$ qubit state $\ket{\alpha_{\algo A}}$, which is forwarded to online adversary $\algo A_1$.
        \item Cryptosystem $\algo P$ samples a random $x \sim \bit^r$, and computes $y=\Sp^{\varphi}(x)$ using private interface $\priv C$. It forwards $y$ to $\algo A_1$.
        \item Online adversary $\algo A_1$ returns some $x' \in \bit^r$ to $\algo P$. The cryptosystem $\algo P$ then checks whether $\Sp^{\varphi}(x')=y$, puts the result in a bit $b \in  \bit$.
        \item Cryptosystem $\algo P$ forwards $b$ to the environment $\algo E$, which then outputs $b$.
    \end{enumerate}
\end{description}
Observe that both games are instances of a security game with pre-computation as described in Section \ref{sec:composition}. Furthermore, the cryptosystem $\mathcal P$ and environment $\mathcal E$ are the same in each game, hence these two are the same game, with the first in the $\pubpriv R$ model and the second in the $\pubpriv C$ model. The cryptosystem $\mathcal P$ makes $T_2 = 2$ queries to the private interface. Now suppose that an $S, T$ adversary wins the game \textbf{Sponge inversion} (i.e. inversion as defined by $\mathcal P, \mathcal E$ in model $\pubpriv C$) with probability $\epsilon$. It follows from Theorem \ref{thm:composition} that there is an $S+1, T+2$ adversary winning \textbf{Function inversion} (i.e. inversion as defined by $\mathcal P, \mathcal E$ in model $\pubpriv R$) with probability at least $\epsilon-O(2^{-r/2})$. This proves the claim for quantum advice and queries. A similar line of reasoning, using the appropriate notion of indifferentiability with pre-computation, proves the remaining two claims.

\end{proof}

Note that significantly improving these bounds would imply new circuit lower bounds, by a result of Corrigan-Gibbs and Kogan \cite{corrigan2019function}. In particular, note that a time-space tradeoff lower bound for sponge inversion implies a similar time-space tradeoff lower bound for function inversion.


%
%
%
\bibliographystyle{alpha}
\bibliography{ref}
%
 

\newpage
\appendix
\section{Technical lemmas}
\label{app:tech-lemmas}
In this section we will show that the full truth table of $\Sp^\varphi$ for a random permutation $\varphi$, and the full truth table of a random function from $r$ bits to $r$ bits, can be distinguished with only exponentially small probability. This holds even when given a sample that is the entire truth table of these two functions; this will be necessary to show that our symmetrization procedure is sound.

\begin{lemma}
    Let $\algo D_1$ be the uniform distribution on functions from $\bit^r \rightarrow \bit^r$. Let $\algo D_2$ be the distribution on functions from $\bit^r$ to $\bit^r$ induced by sampling $\varphi:\bit^n\rightarrow\bit^n$ uniformly at random and taking $\Sp^\varphi$. Then the maximum distinguishing advantage of an algorithm $\algo A$ given a sample (i.e. a full truth table) from $\algo D_1$ or from $\algo D_2$ satisfies \begin{align*}
        \left|\Pr_{f \sim D_1} [\algo A(f) = 1] - \Pr_{f \sim D_2} [\algo A(f) = 1]\right| \leq O\big(2^{-\lambda/2}\big).
    \end{align*}
    \label{lem:sponge-ttable-random}
\end{lemma}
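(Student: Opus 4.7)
My plan is to reduce the question to an estimate of the total variation distance between a multinomial and a multivariate hypergeometric distribution, since the maximum distinguishing advantage for a single sample equals $\mathrm{TV}(\algo D_1, \algo D_2)$. Setting $K = 2^r$, $C = 2^c$, $N = KC$, and letting $m = (m_j)_{j \in \bit^r}$ be the multiplicity vector with $m_j = |\{x : y_x = j\}|$, a direct counting argument yields $\algo D_1(y) = K^{-K}$ and $\algo D_2(y) = \prod_j (C)_{m_j}/(N)_K$, where $(A)_b = A(A-1)\cdots(A-b+1)$ denotes the falling factorial (the count follows by placing, for each $j$, the $m_j$ preimages into distinct slots of $\{j\}\times\bit^c$, then freely extending $\varphi$ on the remaining $N-K$ inputs). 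Both densities depend on $y$ only through $m$, and the conditional distribution of $y$ given $m$ is identical (uniform over compatible arrangements) under both measures, so $\mathrm{TV}(\algo D_1, \algo D_2)$ equals the TV between the induced distributions on $m$: multinomial$(K; 1/K, \ldots, 1/K)$ under $\algo D_1$ versus multivariate hypergeometric under $\algo D_2$.

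Next I would bound the TV via the chi-squared divergence, using $\mathrm{TV} \leq \tfrac{1}{2}\sqrt{\chi^2(\algo D_2 \| \algo D_1)}$. Writing the ratio
\begin{align*}
R(m) := \frac{\algo D_2(m)}{\algo D_1(m)} = \frac{\prod_j \prod_{k=0}^{m_j-1}(1 - k/C)}{\prod_{i=0}^{K-1}(1 - i/N)}
\end{align*}
(using $N = KC$ to cancel the $C^{m_j}$ and $N^K$ factors) and Taylor expanding, one obtains $R(m) - 1 = (K - 1 - M)/(2C) + \text{higher order}$, where $M := \sum_j m_j(m_j-1)$ counts ordered collision pairs. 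To control the higher-order terms cleanly, I would split the computation at the event $\{\max_j m_j \leq L\}$ for $L = \mathrm{polylog}(K)$; a standard Chernoff bound on multinomial occupancy shows the complementary event has probability $K^{-\omega(1)}$, so its contribution to $\chi^2$ is negligible compared to the target.

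On the typical event, I would use standard multinomial moment formulas to compute $E_{\algo D_1}[M] = K \cdot E[m_0(m_0 - 1)] = K - 1$ (so the leading-order term in $R - 1$ has mean zero, consistent with the exact identity $E_{\algo D_1}[R] = 1$) and $\mathrm{Var}_{\algo D_1}(M) = O(K)$ (via binomial fourth moments for each $m_j^2$ summed over $K$ bins, with mild negative-covariance corrections bounded similarly). Plugging in gives $\chi^2(\algo D_2 \| \algo D_1) = O(K/C^2)$, hence $\mathrm{TV}(\algo D_1, \algo D_2) = O(\sqrt{K}/C) = O(2^{r/2 - c}) \leq O(2^{-r/2}) = O(2^{-\lambda/2})$, using $c \geq r$ in the last step.

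The main obstacle will be rigorously controlling the higher-order Taylor terms uniformly in $m$ without losing the exponent; the tail-splitting argument above is the cleanest route, since on the typical event $m_j/C$ is tiny (bounded by $L/C$) and the Taylor error is absorbed by the leading $O(K/C^2)$ bound, while the atypical contribution is swamped by its super-polynomially small probability. An alternative would be to write $R$ as a product of factors $\in [0,1]$ and bound $|R - 1|$ directly via a telescoping-sum decomposition, but the chi-squared approach gives the tightest bound with the least bookkeeping and naturally exploits the fact that $E_{\algo D_1}[R] = 1$ exactly.
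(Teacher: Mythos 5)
Your proposal is correct, and it takes a genuinely different route from the paper. The paper's proof is a short black-box reduction: it rewrites $\algo D_1$ as the restriction of a random function $f':\bit^n \to \bit^r$ to inputs of the form $x\Vert 0^c$, then observes that the distinguishing task becomes exactly that of telling a truncated random function from a truncated random permutation (with $m = n - r$ truncated bits), and invokes the Gilboa--Gueron bound (Lemma~\ref{lem:trunc-perm-advantage}) with $q = 2^r$ classical queries to get $O(2^r / 2^{(n+m)/2}) = O(2^{r/2-c}) \le O(2^{-r/2})$. Your proposal instead computes the total variation distance directly: you correctly identify that both densities factor through the multiplicity vector $m$ (and that the conditional law given $m$ is the same uniform distribution over compatible arrangements), reduce to multinomial-vs-hypergeometric occupancy, and bound TV by the chi-squared divergence, arriving at the same $O(\sqrt{K}/C) = O(2^{r/2-c})$. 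The paper's route is shorter and offloads the combinatorics to an off-the-shelf switching lemma; yours is self-contained, exposes the structural fact $\E_{\algo D_1}[R]=1$ explicitly, and makes visible that the bound is really governed by the variance of the collision count $M$. One small simplification available to you: when $r \le c$ the ratio $R(m)$ is bounded above by a constant (the numerator is $\le 1$ and the denominator $\prod_{i<K}(1-i/N) \ge (1-1/C)^K \ge e^{-1}$ since $K \le C$), so the atypical event $\{\max_j m_j > L\}$ contributes only $O(\Pr[\text{atypical}])$ to the chi-squared without any separate argument; combined with a standard balls-in-bins tail bound this closes the tail-splitting step cleanly. Your sketch of the typical-event computation ($\mathrm{Var}(M) = O(K)$, higher-order Taylor error $O(L^2K/C^2) \ll \sqrt{K}/C$) is sound, and the negative-covariance corrections in the multinomial moments only help.
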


\begin{proof}
    We begin with an alternative characterization of $\algo D_1$. Consider drawing a random $f':\bit^{n} \rightarrow \bit^r$, and then defining $f:\bit^r \rightarrow \bit^r$ as $f(x):=f'(x||0^c)$; $f$ is the sample. The restriction of a random function is still a random function, so this characterization is equivalent, i.e. $f$ is a uniform random function from $r$ bits to $r$ bits. 

    Now we will use Lemma \ref{lem:trunc-perm-advantage} to upper bound the maximum distinguishing advantage given a sample of one of the two truth tables. We can build a (classical) adversary for the game defined in Lemma \ref{lem:trunc-perm-advantage} which queries the $2^r$ inputs of the form $x||0^c$ to obtain a truth table of size $2^r\times r$. We can information-theoretically distinguish a truth table that came from a truly random function ($f'$ above, note that in this case we have the truth table of $f$) from those that come from a truncated permutation ($\varphi$ above, note that in this case we have the truth table of $\Sp^\varphi$) with advantage $\mathsf{ADV}$. We use $q=2^r$ classical queries to the extended function/truncated permutation $f/\varphi$ to construct the truncated function, and have $m=n-\lambda$, so from Lemma \ref{lem:trunc-perm-advantage} we have
    \begin{align*}
        \mathsf{ADV} = O\left(\frac{2^r}{2^\frac{2n-\lambda}{2}}\right)= O\left(2^{-\lambda/2}\right).
    \end{align*}
\end{proof}

We have used the following lemma from Gilboa and Gueron.

\begin{lemma}[\cite{gilboa21truncate}]
    A (classical) adversary which makes $q$ queries to either \begin{enumerate}
        \item a random function $f:\bit^n \rightarrow \bit^{n-m}$, or
        \item a random permutation $\varphi:\bit^n \rightarrow \bit^n$ where the last $m$ bits are discarded (i.e. not learned)
    \end{enumerate}
    can distinguish with advantage $O\left(\frac{q}{2^\frac{n+m}{2}}\right)$.
    \label{lem:trunc-perm-advantage}
\end{lemma}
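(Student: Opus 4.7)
The plan is to upper-bound the adversary's distinguishing advantage by the total variation distance between the distributions of query--response transcripts under the two oracles. By a standard argument (derandomize the adversary and ``record'' the query--answer pairs it sees), it suffices to consider non-adaptive deterministic adversaries issuing a fixed sequence of distinct queries $x_1,\dots,x_q \in \bit^n$ and examining the responses $y_1,\dots,y_q \in \bit^{n-m}$.

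First I would write the two transcript probabilities in closed form. Under the random function, $\Pr_{\mathrm{func}}[y_1,\dots,y_q] = 2^{-(n-m)q}$, independent of the queries. Under the truncated random permutation, counting the permutations of $\bit^n$ that map each $x_i$ to some string with $(n-m)$-bit prefix $y_i$ gives
\[
\Pr_{\mathrm{tperm}}[y_1,\dots,y_q] \;=\; \frac{\prod_{y \in \bit^{n-m}} (2^m)_{m_y}}{(2^n)_q},
\]
where $m_y = |\{i : y_i = y\}|$ and $(a)_k = a(a{-}1)\cdots(a{-}k{+}1)$ is the falling factorial (the numerator counts how many ways to extend the $y_i$'s to distinct full $n$-bit outputs).

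Next I would form the likelihood ratio $R(y) := \Pr_{\mathrm{tperm}}(y)/\Pr_{\mathrm{func}}(y)$ and factor it as a ``global'' piece $(2^n)^q/(2^n)_q$ (slightly $>1$, from the distinctness of the full-permutation outputs) times a ``local'' piece $\prod_y (2^m)_{m_y}/(2^m)^{m_y}$ (slightly $<1$ whenever collisions among the $y_i$ occur). Expanding both pieces to leading order yields
\[
R - 1 \;\approx\; \frac{q^2}{2 \cdot 2^n} \;-\; \frac{K}{2^m} \;+\; (\text{lower-order terms}),
\]
where $K = \sum_y \binom{m_y}{2}$ is the number of collision pairs in the transcript. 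I would then invoke the chi-squared bound $2\,\mathrm{SD}^2 \le \chi^2(P_{\mathrm{tperm}}\|P_{\mathrm{func}}) = \mathrm{Var}_{P_{\mathrm{func}}}[R]$. Since $\mathbb{E}_{P_{\mathrm{func}}}[K] = \binom{q}{2} \cdot 2^{-(n-m)} \approx q^2 \cdot 2^m/(2 \cdot 2^n)$, the two leading terms in $R-1$ have matching expectations that cancel, so the variance is dominated by $\mathrm{Var}[K]/2^{2m}$; under a random function $K$ is approximately Poisson, so $\mathrm{Var}[K]\approx \mathbb{E}[K]$, giving $\chi^2 = O(q^2/2^{n+m})$ and hence $\mathrm{SD} = O(q/2^{(n+m)/2})$.

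The main obstacle is making the leading-order cancellation rigorous \emph{uniformly} over all transcripts, including the rare ones with triple or higher-multiplicity collisions, where $R$ can deviate from $1$ substantially. A cleaner alternative is Patarin's H-coefficient technique: partition transcripts into ``good'' (with bounded $m_y$, on which a direct falling-factorial estimate shows $R \ge 1 - O(q/2^{(n+m)/2})$) and ``bad'' (whose total probability mass one bounds by $O(q/2^{(n+m)/2})$ via a union bound over multiplicity patterns), which yields the claimed bound without needing the second-moment computation at all.
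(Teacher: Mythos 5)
The paper does not actually prove this lemma; it is imported verbatim from Gilboa and Gueron (the citation in the statement header signals as much) and used as a black box inside the proof of Lemma~\ref{lem:sponge-ttable-random}. So there is no ``paper's own proof'' to compare against, and your sketch can only be judged on its own merits.

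On those merits, your route is sound and in fact coincides with one of the standard ways this bound is established in the literature (the chi-squared method of Dai--Hoang--Tessaro, which was applied to exactly the truncated-permutation-vs.-random-function problem; the H-coefficient route you mention as an alternative is the other classical path, going back to Patarin/Stam). Your closed-form transcript probabilities are correct, the factorization of the likelihood ratio $R$ into a global $(2^n)^q/(2^n)_q$ piece and a local collision piece is right, and the leading-order cancellation of $\E[R-1]$ is genuine (it must be, since $\E_{P_{\mathrm{func}}}[R]=1$ exactly). Your claim that $\mathrm{Var}[K]\approx \E[K]$ is not merely a Poisson heuristic: for i.i.d.\ uniform $y_i$ the pairwise-collision indicators $\ind[y_i=y_j]$ are pairwise uncorrelated even when they share an index, so $\mathrm{Var}[K]=\binom{q}{2}p(1-p)$ with $p=2^{-(n-m)}$, which is exactly what you need. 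This gives $\chi^2 = O(q^2/2^{n+m})$ and hence the claimed $O(q/2^{(n+m)/2})$.

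A few points would need tightening to make this a real proof. First, the reduction to non-adaptive deterministic adversaries deserves a sentence: it holds here because, for both worlds, the response distribution on distinct queries is invariant under relabelling of the query points, so the conditional law of the next answer depends only on the multiset of answers seen so far, not on which points were queried; adaptivity therefore cannot help. Second, the constant in the chi-squared-to-statistical-distance inequality is most cleanly obtained by Cauchy--Schwarz, which gives $4\,\mathrm{SD}^2\le\chi^2$ (your $2\,\mathrm{SD}^2\le\chi^2$ is also valid via Pinsker plus $\mathrm{KL}\le\chi^2$, just looser; either is fine for big-O). Third, and most substantively, you are right to flag that the ``$\approx$'' in expanding $R$ and the discarded higher-order terms (cubic-and-above terms in $\ln R$, and cross-covariances between $K/2^m$ and those terms) must be controlled uniformly, and that transcripts with a large $m_y$ make $R$ deviate far from $1$. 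Your proposed fix---either carry the error terms through the variance computation, or switch to an H-coefficient argument that isolates bad transcripts and bounds $R$ from below only on good ones---is exactly how the published proofs handle this. As a sketch, this is a correct and well-chosen approach; it is not yet a proof.
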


\end{document}